\documentclass[]{article}
\usepackage{url}

\usepackage{indentfirst}

\usepackage{algorithm}
\usepackage{algpseudocode}
\usepackage{bm}

\usepackage{amsmath, amsfonts, amsthm}
\usepackage{graphicx}
\usepackage{natbib}
\bibliographystyle{abbrvnat}
\usepackage[a4paper, total={6in, 8in}]{geometry}

\newcommand{\be}{\begin{equation*}}
\newcommand{\ee}{\end{equation*}}
\newcommand{\bee}{\begin{eqnarray*}}
\newcommand{\eee}{\end{eqnarray*}}

\newtheorem{thm}{Theorem}
\newtheorem{cor}[thm]{Corollary}


\begin{document}

\title{Fidelity of Hyperbolic Space for Bayesian Phylogenetic Inference}

\author{Matthew Macaulay$^{1,\ast}$, Aaron E. Darling$^{1,2}$, and
Mathieu Fourment,$^{1}$\\[4pt]
\textit{$^{1}$~University of Technology Sydney, Australian Institute for Microbiology \& Infection,}\\ \textit{Ultimo NSW 2007, Australia}
\textit{$^{2}$~Illumina Australia Pty Ltd, Ultimo NSW 2007, Australia}
\\[2pt]
\textit{*matthew.macaulay@uts.edu.au}}

\markboth%
{Macaulay, Darling and Fourment}
{Hyperbolic Bayesian Phylogenetics}

\maketitle

\begin{abstract}
{
Bayesian inference for phylogenetics is a gold standard for computing distributions of phylogenies.
It faces the challenging problem of moving throughout the high-dimensional space of trees.
However, hyperbolic space offers a low dimensional representation of tree-like data.
In this paper, we embed genomic sequences into hyperbolic space and perform hyperbolic Markov Chain Monte Carlo for Bayesian inference.
The posterior probability is computed by decoding a neighbour joining tree from proposed embedding locations.
We empirically demonstrate the fidelity of this method on eight data sets.
The sampled posterior distribution recovers the splits and branch lengths to a high degree.
We investigated the effects of curvature and embedding dimension on the Markov Chain's performance.
Finally, we discuss the prospects for adapting this method to navigate tree space with gradients.}
{Bayesian inference, phylogenetics, hyperbolic embeddings, Markov Chain.}
\end{abstract}

Bayesian phylogenetics seeks the posterior distribution over discrete tree topologies and continuous tree branch lengths and evolutionary model parameters given an alignment of nucleotide sequences.
Computing the posterior is analytically intractable so it is approximated using Markov chain Monte Carlo (MCMC)~\citep{yang1997bayesian, larget1999markov}.
It is a workhorse algorithm of Bayesian phylogenetics that draws samples from the posterior distribution by proposing new samples that are ``nearby'' the current state of the Markov chain.
However, the space of phylogenetic trees is super-exponential in the number of topologies and navigating tree space is difficult~\citep{whidden2015quantifying, harrington2021properties}.
Most methods to propose a new tree topology are too simplistic, e.g. nearest neighbour interchange (NNI) and subtree prune and re-graft (SPR), and improvements on these by guided proposals comes at the cost of additional computation~\citep{hohna2012guided}.
This all poses the dilemma of how to best represent and navigate the space of phylogenies.

Hyperbolic space offers quality embeddings of tree-like data in low dimensions.
Hyperbolic representations have successfully clustered data hierarchically into trees~\citep{NEURIPS2020_ac10ec1a, gu2018learningb, monath2019gradientbased}.
They optimise a carefully devised objective function to find an optimal embedding.
Unlike those methods, the objective function for Bayesian phylogenetics is prescribed by the model of evolution and prior distribution.
Nonetheless, embedding genomic sequences as points in hyperbolic space and working in the embedding space could provide a way to move through tree space with an improved notion of``locality'' --- small changes to the embedding locations of sequences produce small changes in their distances on the tree.
The embedding may enable a more natural representation of sequence divergences, potentially enabling a MCMC sample to make more complex changes to the model (topology, branch lengths etc) in a single move
Recent works reviewed by~\citet{Iuchi2021representation} have used representations, or specifically hyperbolic representations to learn a maximum likelihood phylogeny~\citep{wilson2021learning, matsumoto2021novel}.
However, their Bayesian counterpart is missing.

Bayesian phylogenetic practitioners increasingly need scalable methods to deal with larger sets of sequences, as highlighted by the current global pandemic of SARS-CoV-2~\citep{ki2022variational}.
A demonstration that MCMC works in hyperbolic space could open up a wealth of possibilities for working with embeddings for Bayesian phylogenetics.
As opposed to machine learning techniques, our work decodes a tree and uses the phylogenetic posterior probability as a cost function for hyperbolic embeddings.
It also generates a distribution of phylogenetic trees rather than a point estimate.

To assess whether hyperbolic embeddings can represent a posterior distribution of trees, we perform MCMC on trees decoded from an embedding of nucleotide sequences.
Trees are decoded using the neighbour-joining (NJ) algorithm~\citep{saitou1987neighborjoining} before computing their likelihood and prior probability for each MCMC generation.
We implement this MCMC in a python package called Dodonaphy, which is available on GitHub at \url{https://github.com/mattapow/dodonaphy}.

The goal of this paper is to empirically demonstrate the fidelity of hyperbolic embeddings for Bayesian phylogenetics.
We begin by detailing the concepts from Bayesian phylogenetics, hyperbolic embeddings of trees and MCMC that are necessary to present the method.
Once devised, we investigate the fidelity of the method and illustrate the embedding landscape.
We then quantify the effects of the curvature and dimension of the embedding space on the MCMC performance.
We also explore how a proposal distribution in the embedding space transfers to tree space.
Our last result is the algorithm complexity and run-times.
Finally, we discuss the numerous future research possibilities that this method opens.

\bigskip
\section{Materials and Methods}
\label{sec:method}

\subsection{Bayesian Phylogenetics}
In phylogenetics, the posterior probability is the probability of a tree and evolutionary model parameters given an alignment of nucleotide sequences $\psi$.
We refer to a phylogenetic tree $T$ which includes both the topology and branch lengths of the tree.
The posterior probability of a tree is
\be
p(T | \psi) = \dfrac{p(\psi | T) p(T)}{p(\psi)}.
\ee
Dodonaphy uses a simple model of evolution, the Jukes-Cantor (JC69) model~\citep{jukes1969evolution} to compute the likelihood given a tree $p(\psi | T)$ using the Felsenstein pruning algorithm~\citep{felsenstein1973maximum}.

For the prior $p(T)$, we utilise a uniform distribution over tree topologies and a Gamma-Dirichlet prior probability of a tree's branch lengths, as previously suggested by~\citet{rannala2012tail}.
The Gamma-Dirichlet prior assigns a Gamma distribution $\Gamma ( \alpha, \beta)$ with shape $\alpha=1$ and rate $\beta = 0.1$ on the total tree length (sum of branch lengths), before dividing this into individual branches with a Dirichlet distribution $\text{Dir}(\alpha_{k}=1.0)$.

The state-of-art phylogenetic software MrBayes~\citep{ronquist2003mrbayes} offers both of these models for the prior and likelihood, which allows us to directly compare its results with Dodonaphy.
We use two \textit{golden} runs of MrBayes with $10^{9}$ iterations and consider these the \textit{ground truth} posterior distribution.

We approximate the posterior distribution through MCMC sampling, which does not require computing the intractable probability of the data $p(\psi)$.
As detailed later, it only computes the likelihood and prior probabilities of a tree.
However, it does require generating proposals, for which, we move to a hyperbolic embedding.

\subsection{The Hyperboloid Model}
One common model of Hyperbolic space is the upper sheet of a hyperboloid $\mathbb{H}^{d} = \{\bm{x} \in \mathbb{R}^{d+1}: \langle \bm{x}, \bm{x} \rangle = -1 \}$ where the Lorentz inner product is defined as
\begin{gather*}
\langle \bm{x}, \bm{y} \rangle = \bm{x}^{\mathsf T} H \bm{y}, \qquad
H = 
\begin{bmatrix}
-1 & 0\\
0 & I
\end{bmatrix}
\in \mathbb{R}^{d+1} \times \mathbb{R}^{d+1}.
\end{gather*}
The length of the geodesic between two points on the sheet is given by the metric
\be \label{eq:metric}
d(\bm{x}, \bm{y}) = \dfrac{1}{\sqrt{-\kappa}} \text{arcosh}(-\langle \bm{x}, \bm{y} \rangle).
\ee
In this model, for a fixed distance $d(\bm{x}, \bm{y})$, a more negative curvature $\kappa<0$ stretches how far apart the points are on the curved sheet, imparting more curvature between the points.
Conversely, to maintain a fixed distance as $\kappa \to 0$, points move towards the origin where the space is flatter.

The first coordinate is determined by the last $d$ coordinates according to $\langle \bm{x}, \bm{x} \rangle = -1$ to stay on the sheet.
This rearranges to give
\be
x_{0} = \sqrt{x_{1}^{2}+x_{2}^{2}+... x_{d}^{2}+ 1}.
\ee
Thus we need only store the last $d$ points in $\mathbb{R}^{d}$ and project up exactly onto the hyperboloid by $\phi: \mathbb{R}^{d} \to \mathbb{H}^{d} \subset \mathbb{R}^{d+1}$. Explicitly, we use:
\begin{equation} \label{eq:x0}
\phi(\bm{x}) =
\begin{bmatrix}
x_{0} \\
\bm{x}
\end{bmatrix}
\end{equation}
projecting the first coordinate and leave the rest unchanged.
The inverse mapping omits the first coordinate: $\phi^{-1}([x_{0}, x_{1}, ..., x_{d}]^{\mathsf T}) = [x_{1}, ..., x_{d}]^{\mathsf T}$.

\subsection{Hyperbolic Embeddings}
To initialise an embedding of the taxa, Dodonaphy begins by minimising the stress of the embedding to a distance matrix.
The distance matrix may either derive from a provided starting tree or the genetic distances between the aligned sequences.
The stress of an embedding of points $\bm{x}_{i} \in \mathbb{H}^{d}$ to a given set of pairwise distances $D_{ij}$ is 
\be
\sigma^{2} = \sum_{i,j} (D_{ij} - d(\bm{x}_{i}, \bm{x}_{j}))^{2}.
\ee
Minimising this stress is a challenging non-convex problem, however, Hydra+ is a recent algorithm that provides a fast approximate solution~\citep{keller-ressel2020hydra}.
It begins at a solution to the much simpler corresponding strain minimisation problem:
\be
\epsilon^{2} = \sum_{i, j}  (\text{cosh}(\sqrt{\kappa} D_{ij}) - \langle \bm{x}_{i}, \bm{x}_{j} \rangle )^{2},
\ee
finding a set of locations $x$ to minimise $\epsilon$ though an Eigen-decomposition.
Then it minimises the stress through gradient-based optimisation.
We reimplemented Hydra+ as a Python package available on GitHub at \url{https://github.com/mattapow/hydraPlus}.

We embed the $n$ labelled taxa as a set of points in a hyperbolic space $X = \{\bm{x}_{i}\}$, $i = 1, 2, ..., n$ using hydra+.
By representing taxa as points on the hyperboloid $\bm{x}_{i} \in \mathbb{H}^{d}$ we can work directly with their hyperbolic distances $d(\bm{x}_{i}, \bm{x}_{j})$.

\subsection{Tree Decoding}
Dodonaphy then employs neighbour-joining to decode a tree $T = \text{NJ}(X)$ from a set of embedding locations $X$.
This decodes both branch lengths and a tree topology at once.
The NJ algorithm takes pairwise distances as input, which is efficiently computed in the hyperboloid model of hyperbolic space using linear algebra~\citep{chowdhary2018improved}.
The main advantage of using NJ is its consistency: it correctly constructs a tree when the given distances fit on that tree.
However, instead of the sequence distances, Dodonaphy utilises the hyperbolic distances between taxa for NJ.
This allows Dodonaphy to move through tree space by moving the set of $n$ tip vectors $X$ through hyperbolic space $\mathbb{H}^{n\times d}$.

\subsubsection{Unique tree decoding}
The neighbour-joining algorithm fits a unique unrooted tree provided that the pairwise distances satisfy the four-point condition: $\forall \bm{w}, \bm{x}, \bm{y}, \bm{z}$
\be
d(\bm{x}, \bm{w}) + d(\bm{y}, \bm{z}) \leq \max\{d(\bm{x}, \bm{y}) + d(\bm{z}, \bm{w}), d(\bm{x}, \bm{z}) + d(\bm{y}, \bm{w})\}.
\ee
When this occurs for all pairs of points, the distances are said to be additive and a tree consistent with the given distances can be recovered by NJ.

Hyperbolic space is often chosen for embedding trees because there exists a bound $\delta > 0$ on how much hyperbolic distances violate the four-point condition:
\begin{equation} \label{eq:delta}
d(\bm{x}, \bm{w}) + d(\bm{y}, \bm{z}) \leq \max\{d(\bm{x}, \bm{y}) + d(\bm{z}, \bm{w}), d(\bm{x}, \bm{z}) + d(\bm{y}, \bm{w})\} + 2\delta.
\end{equation}
Such spaces are called $\delta$-hyperbolic.
Furthermore, this bound $\delta$ can be scaled to become arbitrarily small by making the curvature $\kappa$ more negative~\citep{wilson2021learning}.

\subsection{MCMC}
Our interest is forming posterior distributions of trees, which we achieve through MCMC sampling of the posterior distribution.
MCMC is a classical method for computing the posterior distribution $p(T | \psi)$ given some data $\psi$.
Starting at an initial location in the posterior $X$, a new point $X^{*}$ is proposed from a proposal distribution $g(X^{*}|X)$ and accepted or rejected according to the Metropolis-Hastings algorithm.
A new proposal state $X^{*}$ decodes a tree $T^{*} = \text{NJ}(X^{*})$ which is accepted with probability $\min(1, \alpha)$, where the acceptance ratio is
\be
\alpha = \frac{p(\psi | T^{*}) p(T^{*})}{p(\psi | T) p(T)} \frac{g(X|X^{*})}{g(X^{*}|X)}
\ee
Importantly, it does not depend on the intractable component of the posterior $p(\psi)$.
Choosing a symmetric proposal $g(X^{*}|X)$, such as a Normal distribution, makes the last term simply cancel out.
After a period of burn-in, points generated from this simple procedure compose samples from the targeted posterior distribution.

\subsubsection{Improved MCMC}
We make use of two modifications to the original MCMC algorithm.
First, to improve mixing, Metropolis Coupled MCMC can use multiple chains that can swap intermittently through an MCMC move~\citep{geyer1991markov, altekar2004parallel}.
The advantage of this is the ability to heat chains to move out of local optima more freely and it is routinely used in MrBayes.
This is achieved by raising the acceptance ratio to a power given by a temperature parameter $\alpha^{\tau}$, $0\leq \tau \leq 1$.
We use four chains with temperatures $\tau_{i} = 1 / (1 + \lambda i)$, where $i=0, 1, 2, 3$ is the chain index and $\lambda=0.1$ is fixed, as is done in MrBayes.
This ensures that the first chain remains \textit{cold} ($\tau_{0}=1$) and correctly samples from the posterior distribution.

The second idea is to adapt the covariance matrix of proposals using the Robust Adaptive Metropolis (RAM) algorithm~\citep{vihola2012robust}.
The RAM algorithm is a variant of the Metropolis-Hastings algorithm that uses a covariance matrix to adapt the proposal distribution.
During a warm-up phase, we tune the initially diagonal covariance matrix simply by scaling it to achieve the target acceptance rate of $0.234$, see appendix~A.
After the warm-up phase up we employ the RAM algorithm for the duration of the MCMC.
Simultaneously, it tunes the acceptance rate to ensure computational efficiency.

\subsubsection{MCMC Proposals}
To propose a new MCMC point we draw from a multivariate Gaussian centred over the current taxa positions.
We centre the Gaussian at the locations of each taxa $\bm{x}_{i} \in \mathbb{H}^{d}$, by omitting the first (prescribed) dimension and concatenating all locations into a single vector $\vec{X} \in \mathbb{R}^{n \times d}$. Note the distinction from the set of points denoted $X = \{\bm{x}_{i}\}$ with $i = 1, 2,..., n$ indexing the taxa.
The covariance matrix is initialised to a scalar multiple of the identity $\Sigma = \zeta I_{n\times d}$, with $\zeta=0.1$, before being tuned by the MCMC algorithms.
Explicitly, each proposal point is drawn as $X^{*} \sim \mathcal{N}(\vec{X}, \Sigma)$.
The proposal points are then projected onto the hyperboloid exactly using Eq.(\ref{eq:x0}).

\subsection{Experiments}
\begin{figure}[htbp]
\begin{center}
\includegraphics[width=.7\textwidth]{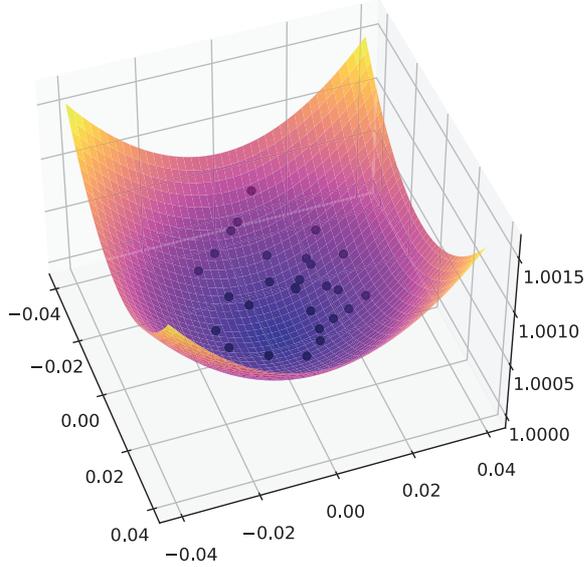}
\caption{Hyperbolic Embedding of an MCMC sample of DS2. 29 points (taxa) lie on a two dimensional hyperboloid sheet $\mathbb{H}^{2}$ in $\mathbb{R}^{3}$.}
\label{fig:visu}
\end{center}
\end{figure}
In summary, Dodonaphy embeds pairwise tip distances $D$ with hydra+ and performs phylogenetic MCMC on a hyperboloid, algorithm~\ref{alg:dodo}.
An example embedding is illustrated in figure~\ref{fig:visu}.
The MCMC is run with a Gamma-Dirichlet prior on the branch lengths and a tree likelihood under a JC69 model of evolution.
The MCMC runs four parallel chains and performs ten MCMC chain swap moves every $10^{3}$ generations.
MCMC proposals are multivariate normal distributions for all tip vectors.
An initial warm-up period of $10^{4}$ iterations serves to tune the covariance matrix before switching to the RAM algorithm until reaching $10^{6}$ iterations.
Dodonaphy draws $10^{4}$ evenly spaced tree samples throughout the simulation.
Unless otherwise stated, simulations are run in three dimensions with curvature $\kappa=-1$.

\begin{algorithm}
\label{alg:dodo}
\caption{Hyperbolic MCMC algorithm in Dodonaphy.}
\begin{algorithmic}[1]
\Procedure{Dodonaphy}{$\psi, \kappa, d, \text{iterations}$}
	\State $D \gets \text{genetic distances}(\psi)$ \Comment{Pair-wise alignment distances}
	\State $X \in \mathbb{H}^{n\times d} \gets \text{hydra+}(D, \kappa, d)$ \Comment{Embed distances}
	\State $T \gets \text{NJ}(X)$ \Comment{Decode NJ tree}
	\For{\text{iterations}}
		\State $X^{*} \gets \text{draw from } \phi(\mathcal{N}(\vec{X}, \Sigma))$ \Comment{Propose new state}
		\State $T^{*} \gets \text{NJ}(X^{*})$  \Comment{Decode NJ tree}
		\State $X, T \gets X^{*}, T^{*} \text{ with probability } \alpha$ \Comment{Metropolis step}
	\EndFor 
\EndProcedure
\end{algorithmic}
\end{algorithm}


We looked at eight real datasets (DS1-DS8) compiled in \citet{lakner2008efficiency} to evaluate the fidelity of hyperbolic space for Bayesian phylogenetics.
The datasets contain between $n=27$ and $n=64$ taxa of aligned nucleotide sequences and are commonly used for phylogenetic benchmarking~\citep{whidden2020systematic}.
They are freely available from treebase~\citep{vos2012nexml} with their identifier listed in table~\ref{tab:data}.
The sequence lengths vary from $378$ to $2520$.
We compress them into $256 \leq L \leq 1252$ unique site patterns by weighting each pattern.

\begin{table*}[htp]
\caption{Data sets used in this analysis.}
\begin{center}
\begin{tabular}{lllll}
\hline
Label & n & sites & Type of data & Treebase \\ \hline
DS1 & 27 & 1949 & rRNA, 18s & M2017 \\
DS2 & 29 & 2520 & rDNA, 18s & M2131 \\
DS3 & 36 & 1812 & mtDNA, COII (1–678), cytb (679–1812) & M127 \\
DS4 & 41 & 1137 & rDNA, 18s & M487 \\
DS5 & 50 & 378 & Nuclear protein coding, wingless & M2907 \\
DS6 & 50 & 1133 & rDNA, 18s & M220 \\
DS7 & 59 & 1824 & mtDNA, COII and cytb & M2449 \\
DS8 & 64 & 1008 &  rDNA; 28s & M2261\\ \hline
\end{tabular}
\end{center}
\label{tab:data}
\end{table*}

\bigskip
\section{Results}
\label{sec:results}
To compare posterior distributions of phylogenetic trees we consider the distributions of splits, branch lengths, total tree lengths and the posterior tree probability.
First, we present these in detail for DS1.
Then we illustrate Dodonaphy's capacity over a range of datasets, curvatures and embedding dimensions.

To summarise the frequency of splits appearing in the posterior, we employ the average standard deviation of split frequencies (ASDSF)~\citep{lakner2008efficiency}.
It is a common statistic used to compare the splits between tree distributions, for example as used by MrBayes.
A threshold of $\text{ASDSF} < 0.05$ is commonly used to indicate that the split frequencies from two distributions closely match each other.

\subsection{Embedding Fidelity}
We test Dodonaphy's capacity for MCMC starting directly from sequence data.
Dodonaphy constructs the neighbour-joining tree directly from the multiple sequence alignment.
The pairwise evolutionary distances (number of substitutions per site) between sequences are fed into hydra+ to embed the taxa as hyperbolic points.
In this section, Dodonaphy runs for $10^{7}$ generations.

\begin{figure}[htbp]
\begin{center}
\includegraphics[width=.33\linewidth]{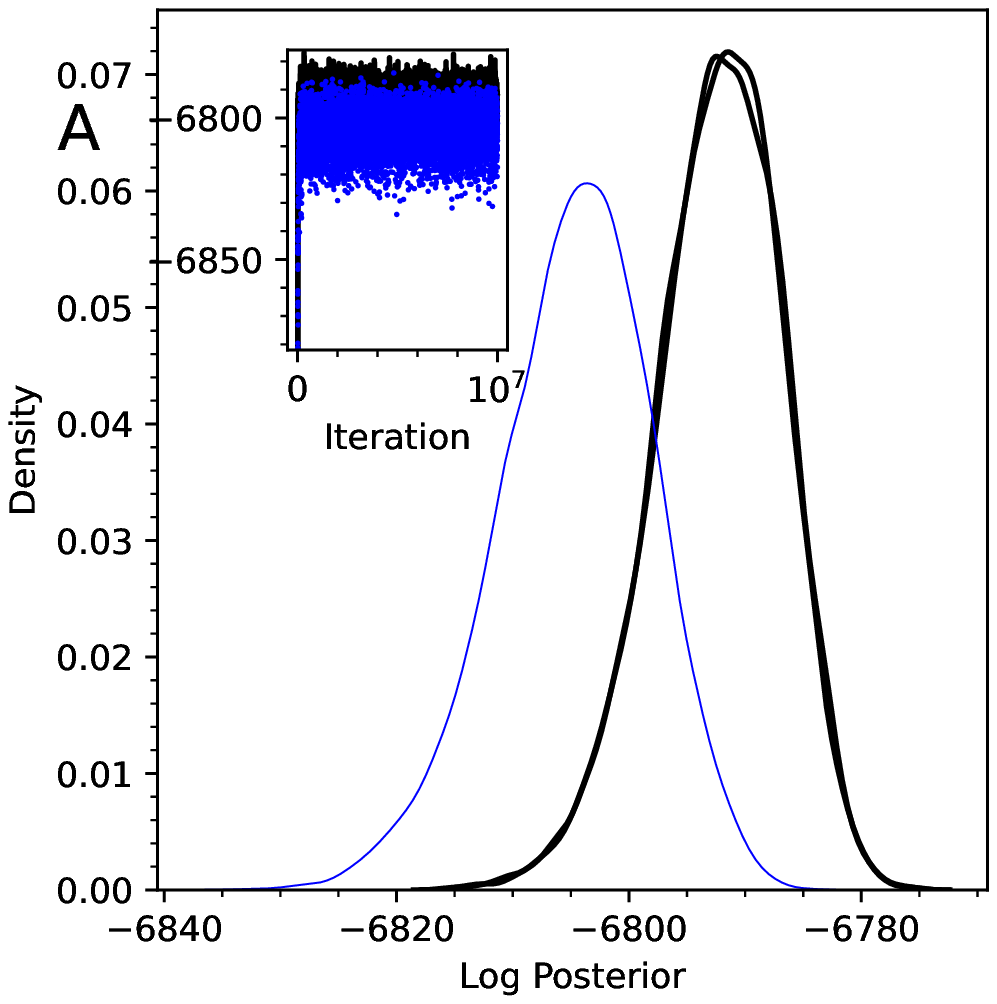}%
\includegraphics[width=.33\linewidth]{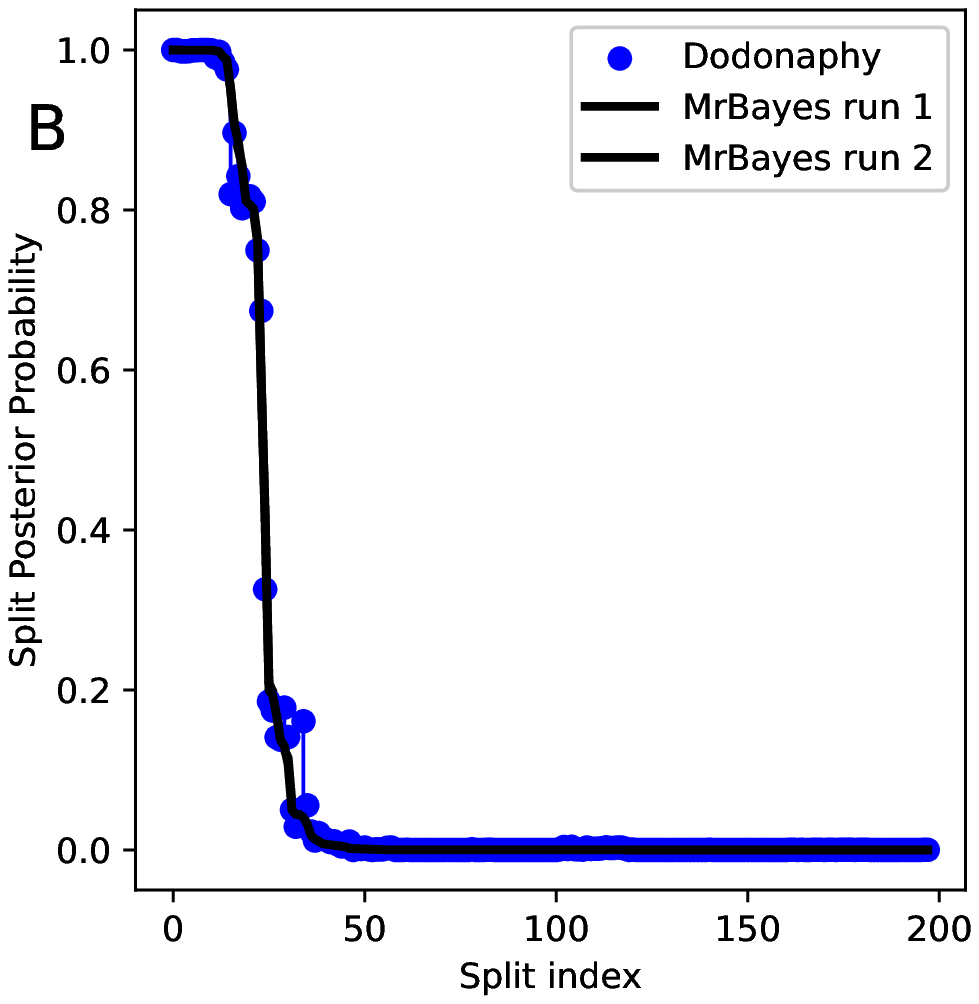}
\includegraphics[width=.33\linewidth]{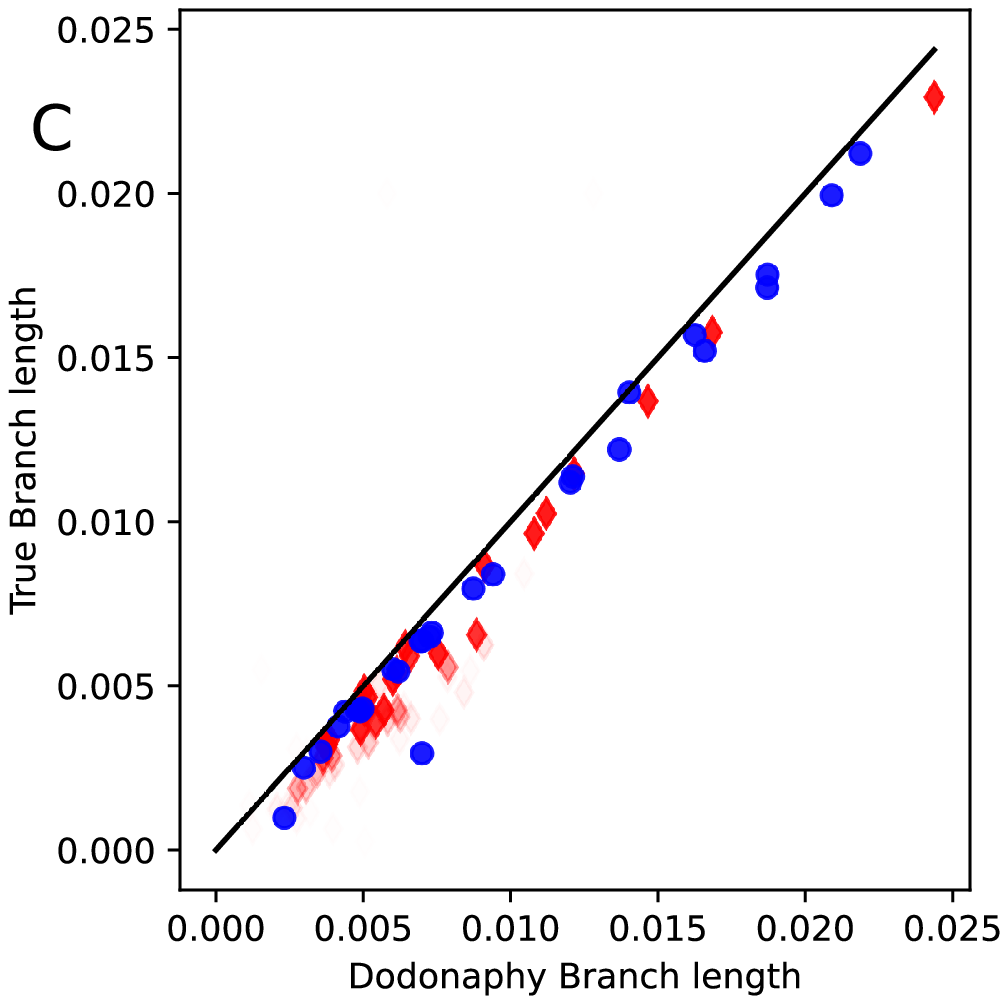}%
\includegraphics[width=.33\linewidth]{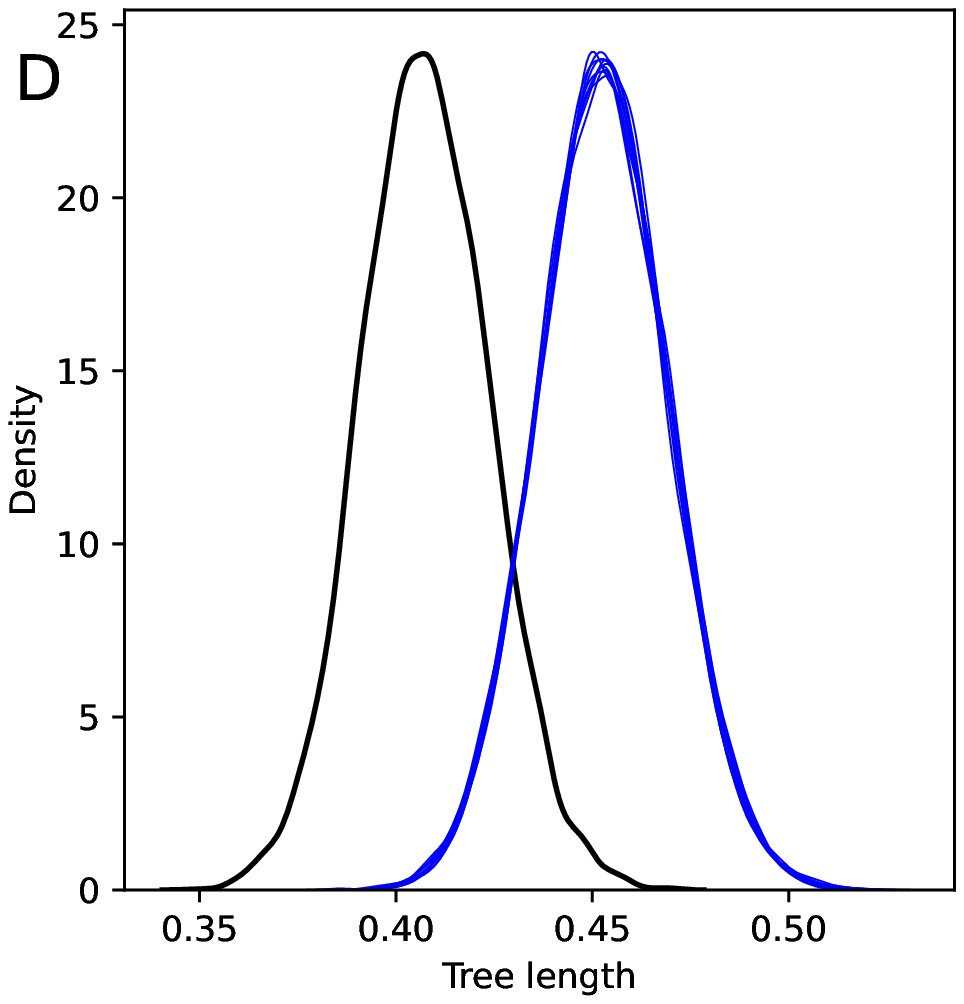}
\caption{Comparison between MrBayes and Dodonaphy's MCMC starting from the evolutionary distances. Comparison of
(a) posterior probability trace,
(b) split frequencies,
(c) mean branch lengths (leaf edges in blue circles, internal edges in red diamonds),
(d) total tree length estimation of 10 repeats.
Markers in (c) are shaded by the frequency of appearance in the golden run.}
\label{fig:start_msa}
\end{center}
\end{figure}

Figure~\ref{fig:start_msa}a) demonstrates that Dodonaphy can move from this starting state into a similar region of tree space as the golden runs of MrBayes.
The starting probability for the joint distribution of the data and parameters for Dodonpahy is about $-8266$, whereas for MrBayes it initialises to $-9822$.
The trace plot illustrates that the burn-in period appears completed and the MCMC does not get stuck in one state for too long, pointing to good mixing.

Dodonaphy explores the posterior splits with considerable fidelity, figure.~\ref{fig:start_msa}b).
It recovered every split appearing in the golden run with support above $10^{-3}$.
Of the $64$ splits from MrBayes that Dodonaphy misses, only one split was visited by either golden run more than once.
Indeed, the ASDSF between Dodonaphy and the first golden run is $0.003$, well below the threshold for equivalence.

The mean length of splits appearing in both the golden runs and Dodonaphy are compared in figure~\ref{fig:start_msa}c).
They generally match well, however, Dodonaphy tends to slightly overestimate branch lengths.
Consequently, the mean tree length posterior is overestimated by Dodonaphy:  $0.453$ compared to $0.408$, a ratio of $1.11$.
The mean variance of the total tree length matches to a high degree: a ratio of $2.830 \times 10^{-4} / 2.656 \times 10^{-4} = 1.007$.
Compared to the golden run, this overestimation reduces the overall log posterior, figure~\ref{fig:start_msa}a) inset.

Repeated runs of Dodonpahy yield similar results on the tree length, figure~\ref{fig:start_msa}d).
The support, mean location and shape of the tree length distribution are self-consistent.

For the following sections, the MCMC chains are initialised from the consensus tree from MrBayes and run for a shorter $10^{6}$ generations.

\subsection{Embedding Curvature}
The embedding curvature is a freely chosen hyper-parameter of the embedding space.
Figure~\ref{fig:crv} reveals how a wide range of curvatures are suitable for all datasets.
Panel a) shows that, compared to a golden run, the ASDSF falls below the threshold of $0.05$ for curvatures in the range $-100\leq\kappa\leq-1$.
Similarly, the relative difference in the median total posterior tree length $\hat{l}$ to a golden run $\hat{l}_{mb}$ falls within $10\%$ for $-100\leq\kappa\leq-1$ range for all datasets.

\begin{figure}[htbp]
\begin{center}
\includegraphics[width=.33\linewidth]{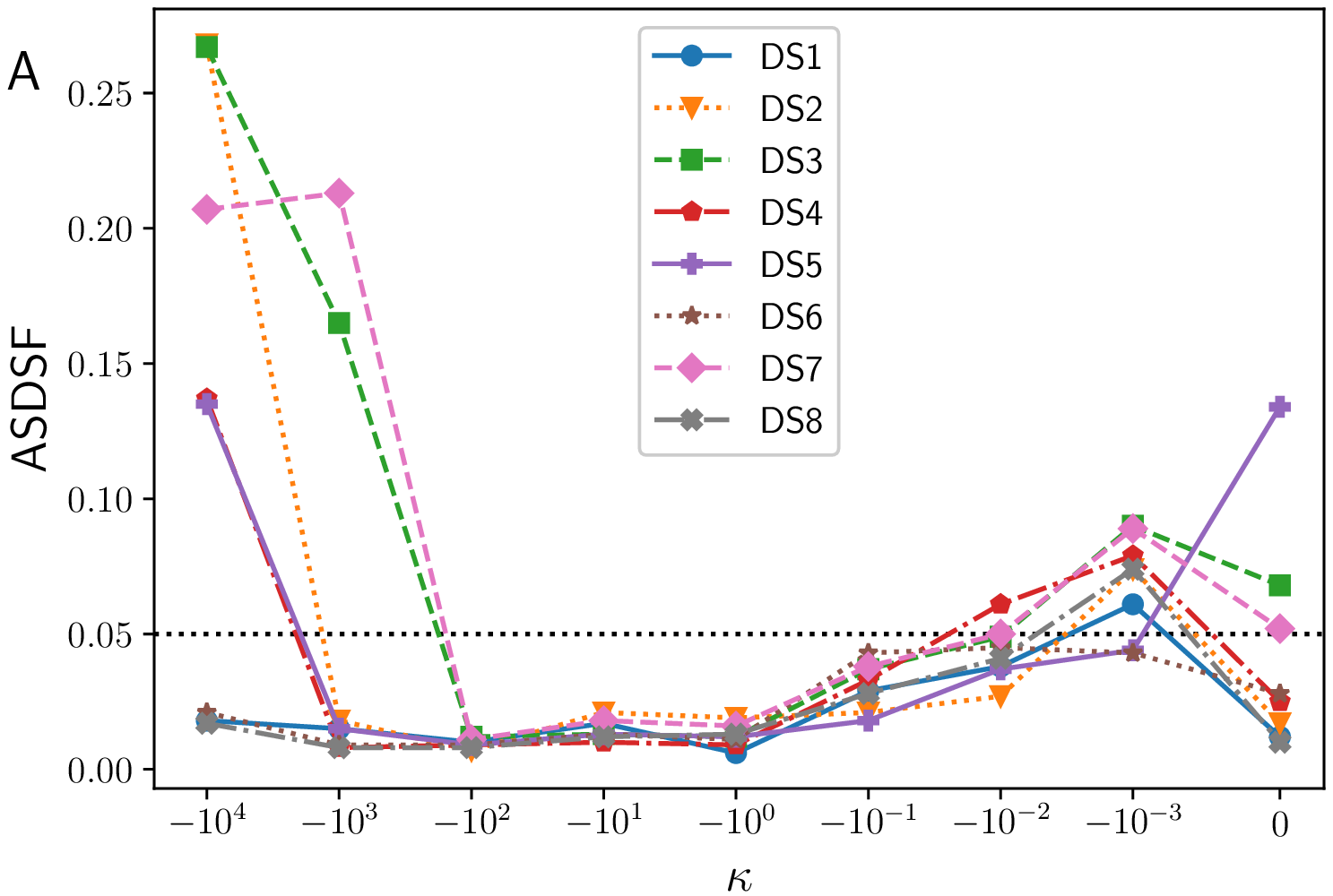}%
\includegraphics[width=.33\linewidth]{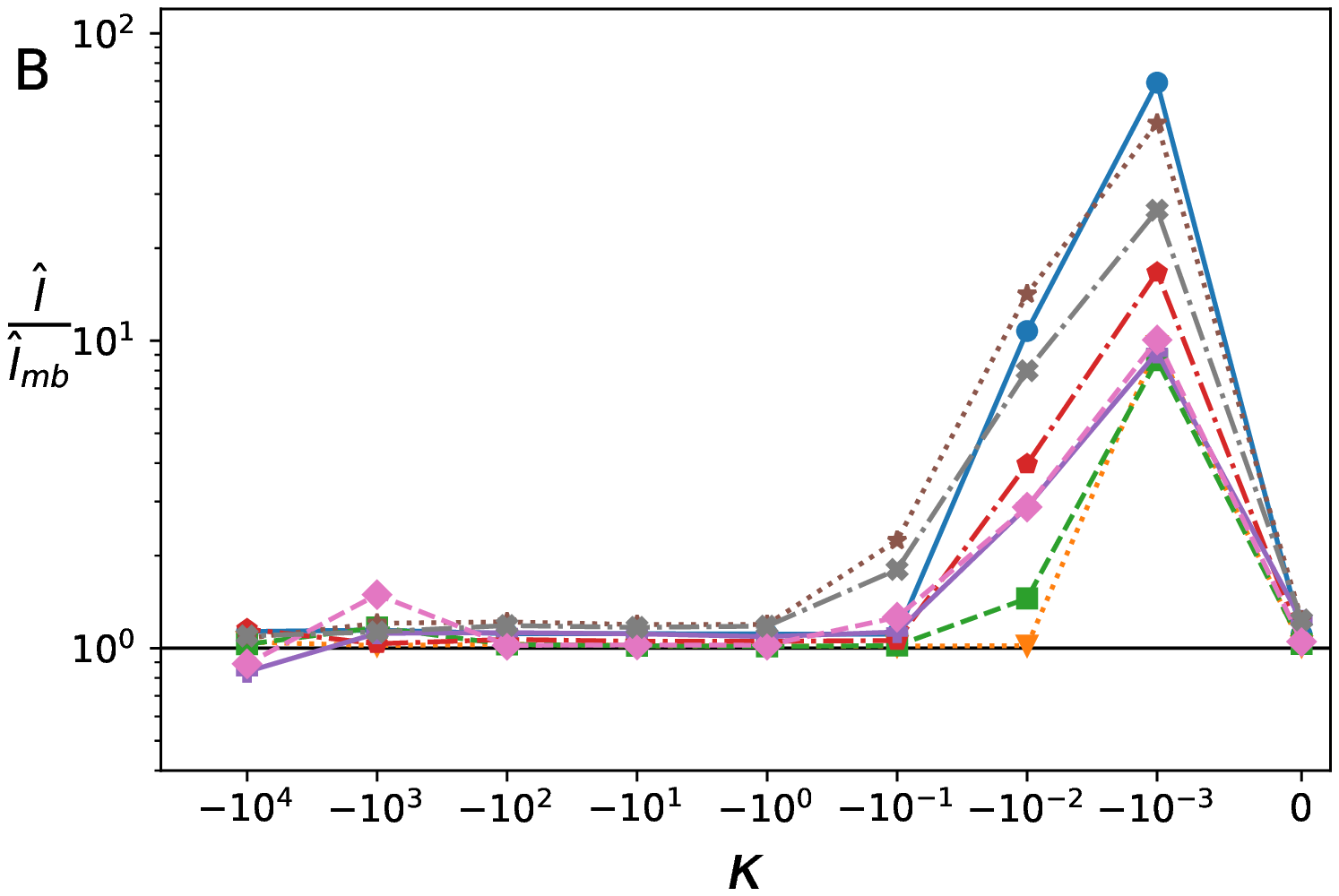}%
\includegraphics[width=.33\linewidth]{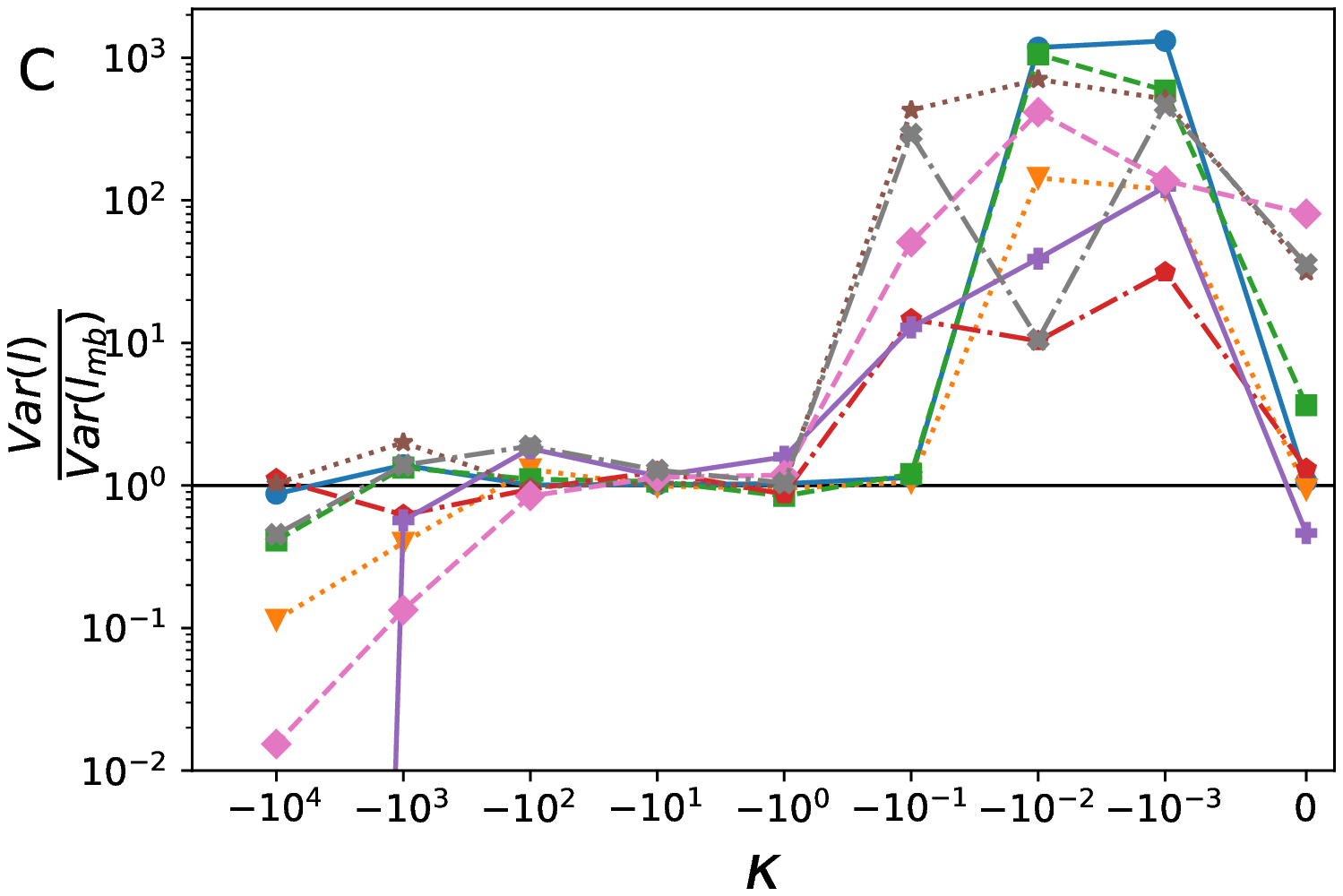}
\caption{Effect of embedding curvature on posterior distribution.
Comparison to true posterior of:
a) ASDSF,
b) relative difference in median tree length,
c) relative difference in variance of tree length.
The truncated variance ratio for DS5 is approximately zero.
}
\label{fig:crv}
\end{center}
\end{figure}

As the space becomes flatter ($\kappa \to 0$), the space becomes more Euclidean.
For $\kappa=0$, we replace the hyperbolic distance with the regular Euclidean metric.
These embeddings were initialised by feeding hydra+ a minuscule curvature of $10^{-10}$.
Note that MCMC in Euclidean space differs from vanilla MCMC because the tips are still embedded.
As the space becomes flatter, the variance of tree lengths is generally larger and the trees tend to be longer, but not monotonically.
Euclidean embeddings had better tree length estimates than weakly hyperbolic embeddings.
However, the ASDSF significantly worsens in this limit, indicating that the wrong trees were recovered in Euclidean space.

In the other extreme, when the curvature is below $-100$, the variance of tree lengths becomes smaller.
MCMCs with low variance are consistent with being constrained to a local optimum in tree space.
This is also evidenced by worsening ASDSFs in this limit, signifying that, in same data sets, the tree topologies of the full posterior are not fully explored.

\subsection{Tree Length Continuity}
The $\delta$-hyperbolic property of hyperbolic space makes it suitable for distance-based phylogenetics.
We build on work by~\citet{wilson2021learning} to highlight the continuity of the decoded tree lengths.
First, let $\delta_{d, \kappa}$ denote the minimal such error $\delta$ in the four point condition in $d$ dimensions with curvature $\kappa$, Eq.~\ref{eq:delta}.

\begin{thm} \citep{wilson2021learning}
For any $d\geq 2$ and $\kappa<0$, there exists $\delta>0$ such that $\mathbb{H}^{d}$ with curvature $\kappa$ is $\delta$-hyperbolic. Furthermore,
\be
\delta_{d, \kappa} = \frac{1}{\sqrt{-\kappa}} \delta_{d, 1}.
\ee
\end{thm}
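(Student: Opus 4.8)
The plan is to deduce the general statement for curvature $\kappa < 0$ from the known case $\kappa = -1$, using that the curvature-$\kappa$ metric is simply a constant rescaling of the curvature-$(-1)$ metric. Writing $\lambda := 1/\sqrt{-\kappa} > 0$, the geodesic-distance formula on the hyperboloid gives $d_\kappa(\bm{x},\bm{y}) = \lambda\,\text{arcosh}(-\langle \bm{x},\bm{y}\rangle) = \lambda\, d_{-1}(\bm{x},\bm{y})$ for all $\bm{x},\bm{y} \in \mathbb{H}^{d}$, since neither the point set nor the Lorentz inner product depends on $\kappa$. Thus the identity map on $\mathbb{H}^d$ carries $(\mathbb{H}^d, d_{-1})$ onto $(\mathbb{H}^d, d_\kappa)$ while multiplying every pairwise distance by $\lambda$.

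Next I would verify that the ``defect'' in the four-point condition is homogeneous of degree one under this rescaling. For a quadruple $\bm{w},\bm{x},\bm{y},\bm{z}$, let $\Delta_\kappa(\bm{w},\bm{x},\bm{y},\bm{z})$ be the left-hand side of Eq.~\eqref{eq:delta} evaluated with $d_\kappa$, minus the maximum on its right-hand side, so that $\mathbb{H}^d$ of curvature $\kappa$ is $\delta$-hyperbolic exactly when $\Delta_\kappa \le 2\delta$ for all quadruples. Since $\max(\lambda a, \lambda b) = \lambda \max(a,b)$ for $\lambda > 0$, the previous paragraph gives $\Delta_\kappa = \lambda\,\Delta_{-1}$ pointwise, hence
\be
\sup \Delta_\kappa \;=\; \lambda \, \sup \Delta_{-1}.
\ee
Because $\delta_{d,\kappa}$ is by definition the least $\delta$ with $2\delta \ge \sup\Delta_\kappa$, i.e.\ $\delta_{d,\kappa} = \tfrac12\sup\Delta_\kappa$, this yields $\delta_{d,\kappa} = \lambda\,\delta_{d,1} = \frac{1}{\sqrt{-\kappa}}\,\delta_{d,1}$, as claimed. (For $d = 1$ the identity is the trivial one, since $\mathbb{H}^1$ is isometric to the real line, a tree, and both sides vanish.)

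The one substantive ingredient remaining is the existence clause: that $\delta_{d,1}$ is finite and positive, equivalently $0 < \sup\Delta_{-1} < \infty$. This is the classical fact that real hyperbolic space is Gromov hyperbolic (but not a tree for $d \ge 2$); a standard computation in the Poincar\'e-ball or hyperboloid model bounds $\delta_{d,1}$ by a universal constant independent of $d$. I expect this to be the main obstacle only in a bookkeeping sense --- one must control the defect uniformly over \emph{all} four-point configurations, not merely configuration by configuration --- and it is precisely where one invokes, or reproves via an explicit model estimate, the cited classical result. A cosmetic alternative is to run the same argument with the Gromov-product form of $\delta$-hyperbolicity, $(\bm{x}\mid\bm{y})_{\bm{w}} \ge \min\{(\bm{x}\mid\bm{z})_{\bm{w}},\,(\bm{y}\mid\bm{z})_{\bm{w}}\} - \delta$, which is likewise degree-one homogeneous in the metric and gives the same scaling law.
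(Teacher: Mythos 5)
Your proposal is correct: with the paper's convention that the curvature only enters through the prefactor $1/\sqrt{-\kappa}$ in the metric, the four-point defect is homogeneous of degree one under the rescaling $d_\kappa = \lambda d_{-1}$, and the existence clause reduces to the classical Gromov hyperbolicity of $\mathbb{H}^d$ with curvature $-1$. The paper itself gives no proof --- it states the theorem as a citation to \citet{wilson2021learning} --- and your rescaling-plus-classical-fact argument is exactly the standard route taken in that cited source, so there is nothing to flag beyond noting that the finiteness of $\delta_{d,1}$ is the only ingredient you are importing rather than proving.
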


This scaling is significant for neighbour-joining because it leads to additive embedding distances.
A distance matrix $D$ is \textit{additive} if there is a weighted tree such that the distances on the tree $D_{T}$ match the given distances $D = D_{T}$.
The distances are additive when the four point condition is satisfied $\delta = 0$.
When the input distances are additive, neighbour-joining decodes a consistent tree.
That is, it decodes a unique tree with correct tree distances: $D = D_{T}$.

Neighbour joining remains consistent when the distances are almost additive.
For the following theorem, we need the $l_{\infty}$ norm, written $||D-D_{T}||_{\infty}$, which is the maximum element-wise difference between $D$ and $D_{T}$.
The radius of a method $f$ is the maximum $\alpha$ in
\be
||D-D_{T}||_{\infty} < \alpha \min_{e\in E(T)}(l_{e})
\ee
so that $f(D) = T$ produces the same tree with edges $E(T)$.

\begin{thm} \citep{atteson1999performance} 
\label{thm:atterson}
The $\l_{\infty}$ radius of neighbour joining is $\frac{1}{2}$.
\end{thm}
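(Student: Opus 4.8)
The plan is to prove both halves of the statement: that $\tfrac12$ is an admissible radius (if $\|D-D_T\|_\infty$ is strictly below half the shortest edge of $T$, neighbour joining returns a tree with edge set $E(T)$) and that it is the largest such value. Write $\mu=\min_{e\in E(T)}l_e$ and $\epsilon=\|D-D_T\|_\infty$, and recall that neighbour joining iterates the following: form the Q-criterion $Q(i,j)=(n-2)D_{ij}-\sum_k D_{ik}-\sum_k D_{jk}$, join the minimising pair $(i,j)$ into a new node $u$, replace their two rows by $D_{uk}=\tfrac12(D_{ik}+D_{jk}-D_{ij})$, and recurse on the $(n-1)$-taxon matrix. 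I would use the classical fact (Saitou--Nei; Studier--Keppler) that when the input is the additive matrix $D_T$ the Q-minimiser is a cherry of $T$, together with the standard observation that a reduction repeatedly contracting a genuine cherry of the current true tree reconstructs $E(T)$. So the sufficiency direction reduces, by induction on $n$, to showing that $\epsilon<\tfrac12\mu$ forces the first contraction to pick a cherry of $T$ and leaves the hypothesis intact for the reduced instance.

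For the first contraction I would sharpen the Saitou--Nei fact to a quantitative gap: an explicit lower bound $g(T)>0$, scaling linearly in $n$ and in the length of the relevant internal edge of $T$, with $Q_T(k,l)-Q_T(i,j)\ge g(T)$ for every non-cherry pair $(k,l)$ and a suitably chosen cherry $(i,j)$. Establishing this by expanding $Q_T$ along the paths of $T$ is the technical core, and the step I expect to be the main obstacle, since the naive bound $|Q-Q_T|\le(3n-4)\epsilon$ is far too weak. The way past it is to perturb the \emph{difference} $Q(k,l)-Q(i,j)=(n-2)(D_{kl}-D_{ij})-r_k-r_l+r_i+r_j$ rather than the two terms separately: written through the row sums, most error terms cancel, leaving a combination of only $O(n)$ of the entrywise errors $D_{pq}-(D_T)_{pq}$ and hence a deviation bounded by a constant times $n\epsilon$. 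Since $\epsilon<\tfrac12\mu$, this is strictly below $g(T)$, so the perturbed Q-criterion still singles out a cherry of $T$.

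It then remains to transfer the hypothesis to the reduced matrix after contracting the cherry $(i,j)$ into $u$. Two quick checks: the reduced additive matrix is exactly the tree metric of $T'=T/\{i,j\}$, and $\min_{e\in E(T')}l_e\ge\mu$, because the new pendant edge at $u$ inherits the length of the old internal edge and all other edges are unchanged. The subtlety is that the entrywise error of the reduced matrix on the new row is $\tfrac12(e_{ik}+e_{jk})-\tfrac12 e_{ij}$, which can reach $\tfrac32\epsilon$; but the offending term $-\tfrac12 e_{ij}$ is a \emph{constant} added to row $u$ and column $u$, and a short computation shows that adding a constant to a row and its matching column shifts every value of the Q-criterion by the same amount. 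Hence the reduced Q-criterion is, up to one global additive constant, the Q-criterion of a matrix within $\epsilon$ of $D_{T'}$, and the induction closes with $\epsilon<\tfrac12\mu\le\tfrac12\min_{e\in E(T')}l_e$.

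Finally, for tightness I would take the four-taxon quartet with cherries $\{1,2\}$ and $\{3,4\}$, internal edge of length $\mu$, and unit pendant edges; here $Q_T$ prefers each true cherry over each wrong pair by exactly $2\mu$. On four taxa $Q(1,3)-Q(1,2)$ collapses to $D_{13}-D_{12}+D_{24}-D_{34}$, so perturbing these four entries each by slightly more than $\tfrac\mu2$ — hence, for any $\alpha>\tfrac12$, by less than $\alpha\mu$ — drives $Q(1,3)$ and $Q(2,4)$ strictly below $Q(1,2)$ and $Q(3,4)$, and neighbour joining returns the incorrect topology $\{\{1,3\},\{2,4\}\}$ while $\|D-D_T\|_\infty<\alpha\mu$. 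Thus no $\alpha>\tfrac12$ is a valid radius, and together with sufficiency the $l_\infty$ radius of neighbour joining is exactly $\tfrac12$.
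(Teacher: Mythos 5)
The paper itself does not prove this statement; it is quoted directly from \citet{atteson1999performance}, so your sketch has to stand on its own. Its skeleton (show the noisy Q-criterion still selects a cherry, contract and induct, then a quartet example for tightness) is indeed the standard route, and two of your ingredients check out: the observation that the troublesome term $-\tfrac12 e_{ij}$ in the reduced row is a constant on row/column $u$, and that adding such a constant shifts every value of $Q(a,b)=(n-2)D_{ab}-r_a-r_b$ by the same amount (by $-2c$, and more generally a perturbation of the form $c_a+c_b$ shifts all $Q$ by $-2\sum_p c_p$), so the induction hypothesis survives the NJ reduction; and the four-taxon computation $Q(1,3)-Q(1,2)=D_{13}-D_{12}+D_{24}-D_{34}=2t$ for the additive quartet, which correctly shows no $\alpha>\tfrac12$ is attainable.

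The genuine gap is the step you yourself flag as the core: the claimed bound $Q_T(k,l)-Q_T(i,j)\ge g(T)$ with $g(T)$ ``scaling linearly in $n$'' is asserted, not proved, and a generic ``gap $\gtrsim n\mu$ versus perturbation $\lesssim Cn\epsilon$'' argument cannot deliver the sharp constant $\tfrac12$, because the two constants coincide exactly in the worst case. Concretely, expanding the difference with all cancellations, the noise in $Q(k,l)-Q(i,j)$ can be as large as $4(n-3)\epsilon$ when the pairs share a taxon and about $6(n-4)\epsilon$ when they are disjoint; meanwhile, comparing a non-cherry pair $(k,l)$ with a cherry $(k,k')$ containing $k$, the additive gap can be exactly $2(n-3)\mu$ (e.g.\ when the only internal edge on $P(k,l)$ of weight bigger than one is the edge at the cherry and has length $\mu$). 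So $4(n-3)\epsilon<2(n-3)\mu$ holds only because $\epsilon<\tfrac12\mu$ with nothing to spare: any unspecified constant in $g(T)$, or any use of the naive $O(n\epsilon)$ bound with the wrong comparison pair, loses the factor that makes the radius $\tfrac12$ rather than something smaller. Moreover, when neither $k$ nor $l$ lies in any cherry you are forced into the disjoint comparison, which needs a gap of roughly $3(n-4)\mu$ and hence its own structural argument about where the comparison cherry sits relative to the path $P(k,l)$ (this case analysis is precisely the substance of Atteson's proof). As written, your proposal replaces this lemma by the hope that the constants work out, so the sufficiency half of the theorem is not established.
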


\begin{cor}
\label{cor:length_continuous}
In the limit $\kappa \to -\infty$, the length of a decoded tree $l(\text{NJ}(X))$ is continuous in the embedding locations $X$.
\end{cor}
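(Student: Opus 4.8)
The plan is to combine the vanishing of the hyperbolicity constant from Theorem~1 with the strictly positive safety radius of neighbour joining from Theorem~\ref{thm:atterson}, using that the pairwise-distance map $\Phi\colon X\mapsto (d(\bm{x}_i,\bm{x}_j))_{i,j}$ is smooth, and in particular continuous, by Eq.~(\ref{eq:metric}). First I would record that $\delta_{d,\kappa}=\delta_{d,1}/\sqrt{-\kappa}\to 0$ as $\kappa\to-\infty$ by Theorem~1, so the four-point defect in Eq.~(\ref{eq:delta}) tends to $0$. By the classical tree-approximation lemma of Gromov used by \citet{wilson2021learning} --- any $\delta$-hyperbolic metric on $n$ points lies within $l_\infty$ distance $c_n\delta$ of some tree metric, with $c_n$ depending only on $n$ --- for every configuration $X$ there is a tree metric $\widehat{D}(X)$ with $\|\Phi(X)-\widehat{D}(X)\|_\infty \le c_n\,\delta_{d,\kappa}=:\eta(\kappa)$ and $\eta(\kappa)\to 0$. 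Thus in the limit the embedded distances are additive, neighbour joining is consistent, and $\mathrm{NJ}(X)$ is the unique tree realising $\Phi(X)$; its total length $l(\mathrm{NJ}(X))$ is then a well-defined function of $X$.

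Next I would establish continuity at a \emph{regular} configuration $X_0$, meaning one whose nearest tree metric $\widehat{D}(X_0)$ has minimum edge length $\ell_{\min}>0$. For $-\kappa$ large enough that $\eta(\kappa)<\tfrac14\ell_{\min}$, and for $X$ in a neighbourhood of $X_0$ small enough that $\|\Phi(X)-\Phi(X_0)\|_\infty<\tfrac14\ell_{\min}$ (possible since $\Phi$ is continuous), the $l_\infty$ triangle inequality gives $\|\Phi(X)-\widehat{D}(X_0)\|_\infty<\tfrac12\ell_{\min}$, so Theorem~\ref{thm:atterson} forces $\mathrm{NJ}(X)$ to carry the topology of $\widehat{D}(X_0)$; the decoded topology is therefore locally constant. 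Shrinking the neighbourhood further to avoid ties in the $Q$-criterion, the whole execution of neighbour joining follows a fixed combinatorial path, and the branch lengths it returns are given by the usual averaging formulas, which are continuous in the input distances. Composing with the continuous map $\Phi$ shows $X\mapsto l(\mathrm{NJ}(X))=\sum_e \ell_e$ is continuous at $X_0$.

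It remains to pass continuity across the closed set of non-regular configurations, where the nearest tree metric is degenerate (some internal edge has length $0$). Such a configuration lies on the common boundary of regular regions whose decoded topologies differ by a nearest-neighbour interchange, that is, by contracting one internal edge and re-expanding another. Because the limiting distances are additive, on each side the exactly fitted pendant and internal edge lengths converge, as $X$ tends to the boundary, to those of the common star-like degeneration, in which the contracted edge has length $0$; hence the one-sided limits of the total length agree and equal its value at the boundary configuration. (Equivalently: for a distance matrix that is exactly a tree metric, neighbour joining returns a tree realising that metric under any tie-breaking rule, so $l(\mathrm{NJ}(X))$ cannot jump there.) Together with the regular case this shows $l\circ\mathrm{NJ}$ is continuous in $X$ in the limit $\kappa\to-\infty$.

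The main obstacle is exactly this last step. For general, non-additive distances $l\circ\mathrm{NJ}$ is only piecewise continuous: a tie in the $Q$-criterion can flip the agglomeration order, and the two resulting branch-length assignments need not agree, so the total length can jump. What removes these jumps in the limit is additivity --- any competing execution path then fits the same (possibly degenerate) tree metric and so returns the same total length. A secondary, routine point is the use of Gromov's lemma to convert the four-point defect bound of Eq.~(\ref{eq:delta}) into genuine $l_\infty$-closeness to a tree metric, which is what allows Theorem~\ref{thm:atterson} to be invoked.
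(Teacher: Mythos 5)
Your proof is correct and follows essentially the same route as the paper's: the scaling $\delta_{d,\kappa}\to 0$ from Theorem~1, the $l_\infty$ radius of neighbour joining (Theorem~\ref{thm:atterson}) to lock the decoded topology locally, and near-additivity to identify the decoded tip--tip distances with the continuous hyperbolic distances, hence continuity of the total length. You additionally spell out two steps the paper leaves implicit --- Gromov's tree-approximation lemma to convert the four-point defect into genuine $l_\infty$ closeness to a tree metric, and the agreement of total lengths across topology boundaries and $Q$-criterion ties --- which strengthens rather than changes the argument.
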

\begin{proof}
In the limit $\kappa \to -\infty$, the error becomes arbitrarily small $\delta_{d, \kappa} \to 0$.
In particular, there exists $\kappa < 0$ such that $||D-D_{T}||_{\infty} < \alpha \min_{e\in E(T)}(l_{e})$.
When this occurs, theorem~\ref{thm:atterson} ensures the correct topology is decoded and the distances are arbitrarily close to \textit{additive}.
Consequently, the hyperbolic distances match the tip-tip distances on the tree, even as the topology changes.
Since the hyperbolic distances are continuous in the embedding locations, so are the tree lengths.
\end{proof}

If $\kappa$ is too large, NJ may not reconstruct a unique tree, making the MCMC dependent on the particular taxa selected to merge in the algorithm.
This difficulty could partly explain the large variances observed in tree lengths when $\kappa > -1$.

In the next section, we explore how the continuity of decoded distances translates into the posterior landscape.

\subsection{Posterior Landscape}
To gain an intuition for the effect of curvature on the posterior landscape, we mapped it for one taxon's position.
We selected a set of embedding locations from an MCMC run in two dimensions $\mathbb{H}^{2}$.
We then fixed all taxa locations except one, for which we performed a grid search.
At each embedding location of this node, an NJ tree is decoded and the joint probability of this tree and the data is recorded.
To plot in two dimensions, we projected the locations on the hyperboloid onto $\mathbb{R}^{2}$ by $\phi^{-1}$.

\begin{figure}[htbp]
\begin{center}
\includegraphics[width=.33\linewidth]{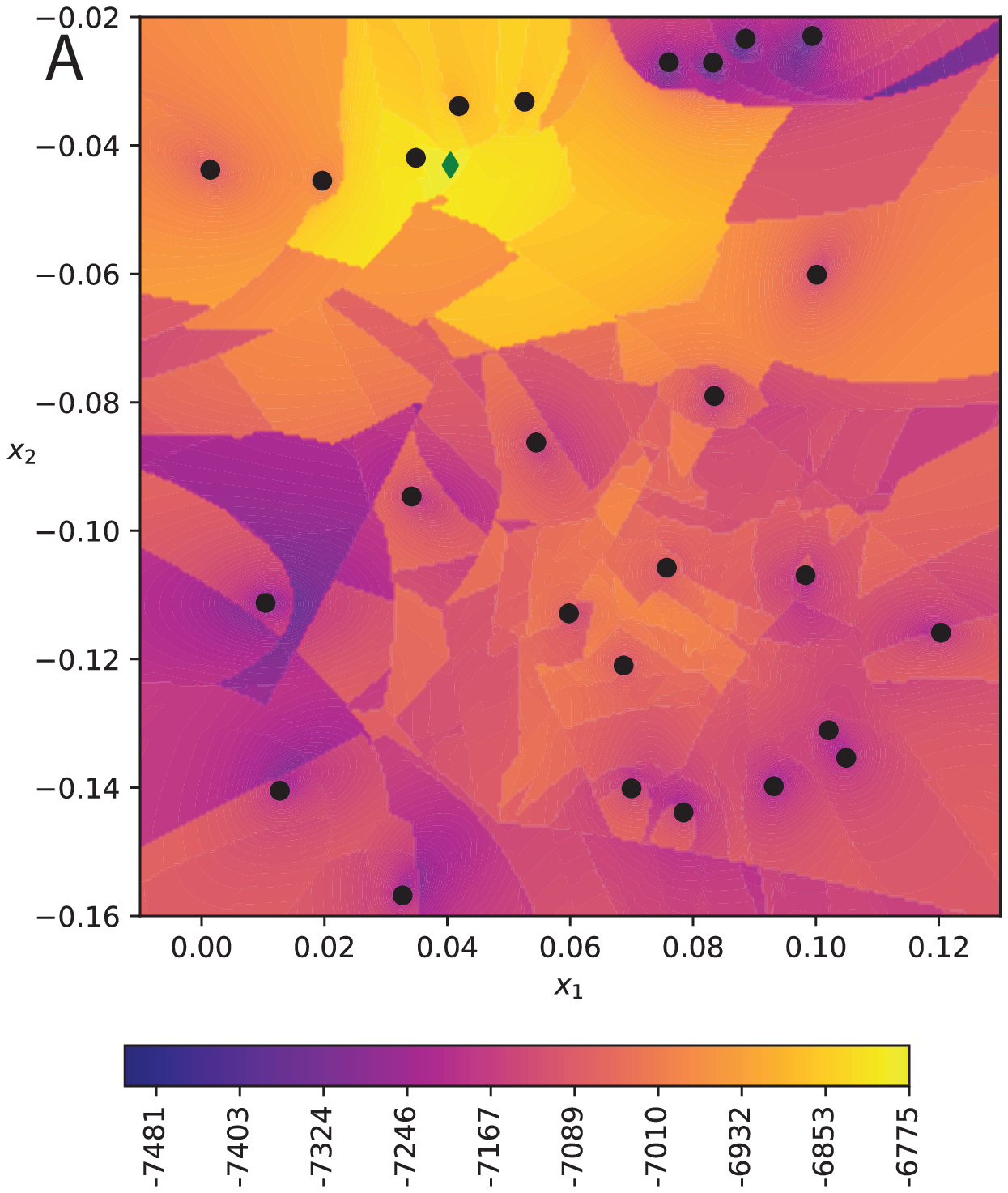}%
\includegraphics[width=.33\linewidth]{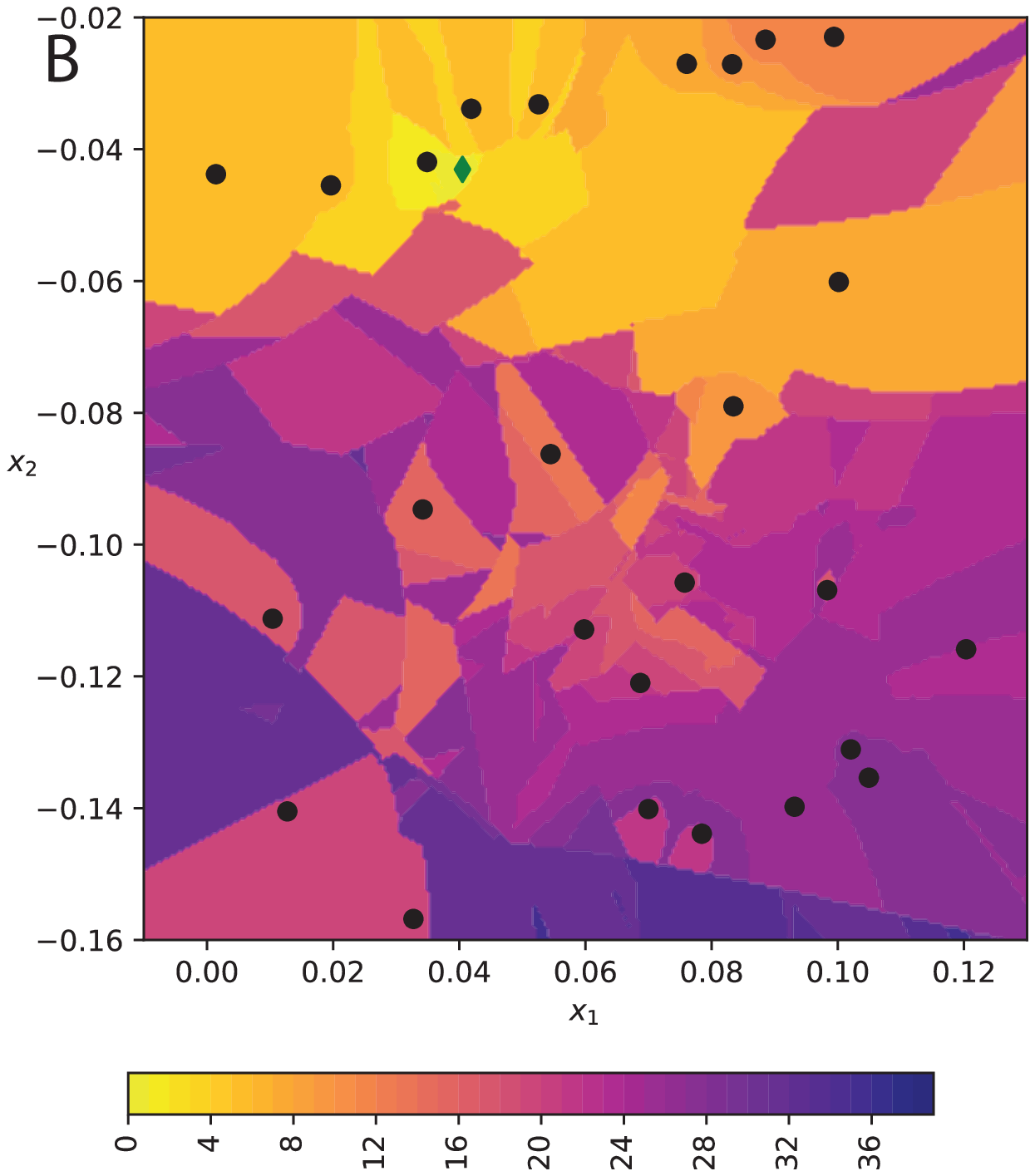}%
\includegraphics[width=.33\linewidth]{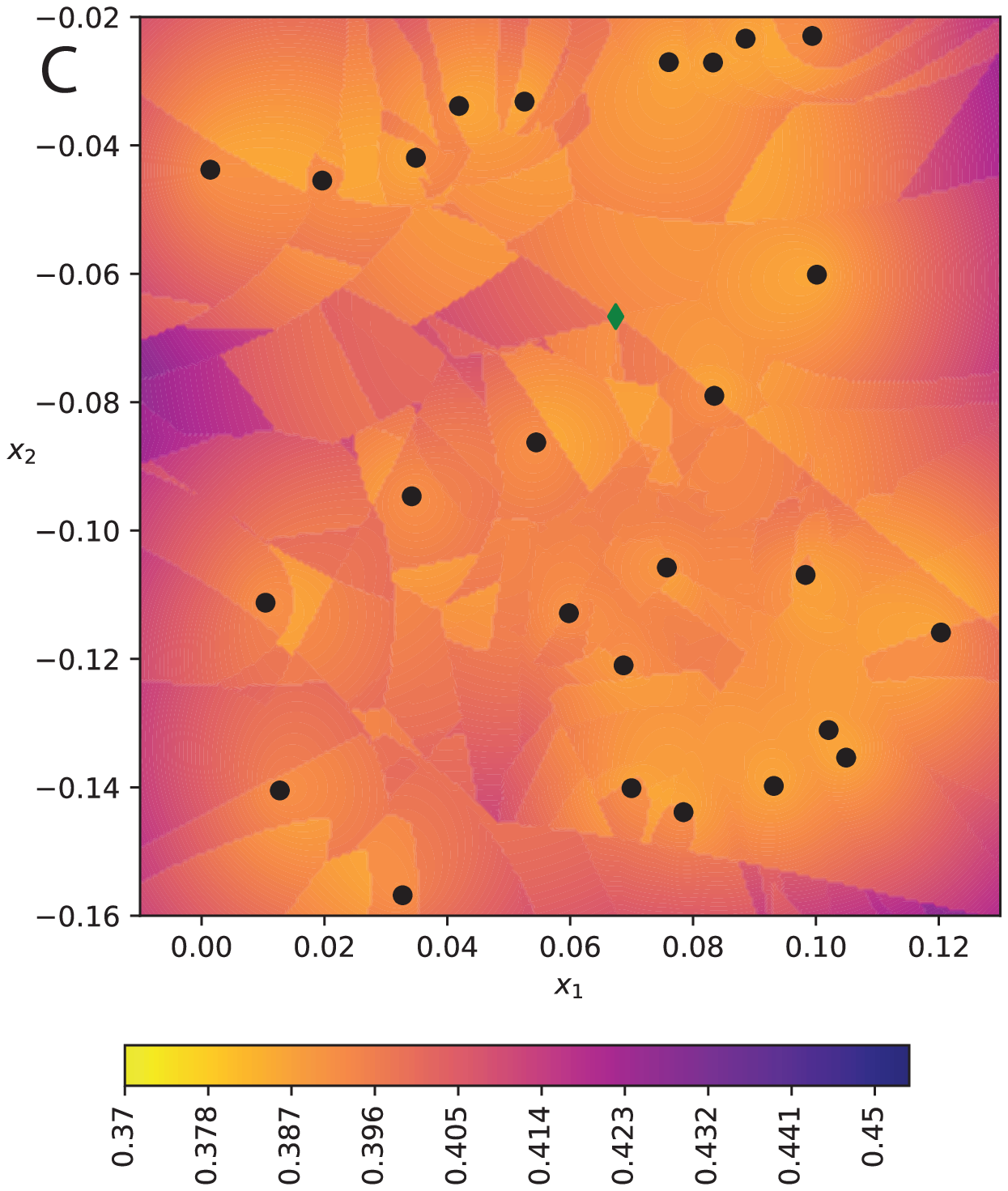}
\includegraphics[width=.33\linewidth]{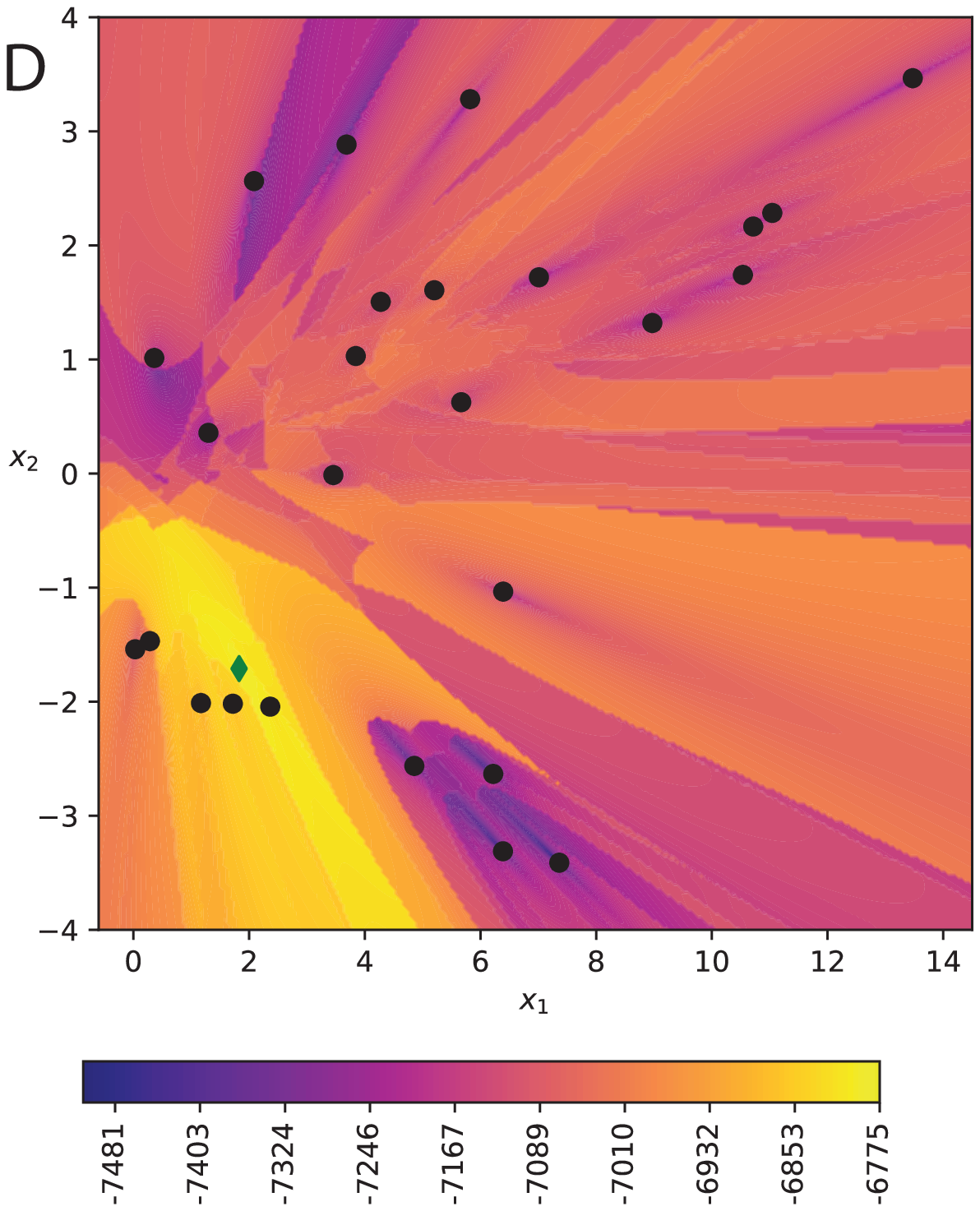}%
\includegraphics[width=.33\linewidth]{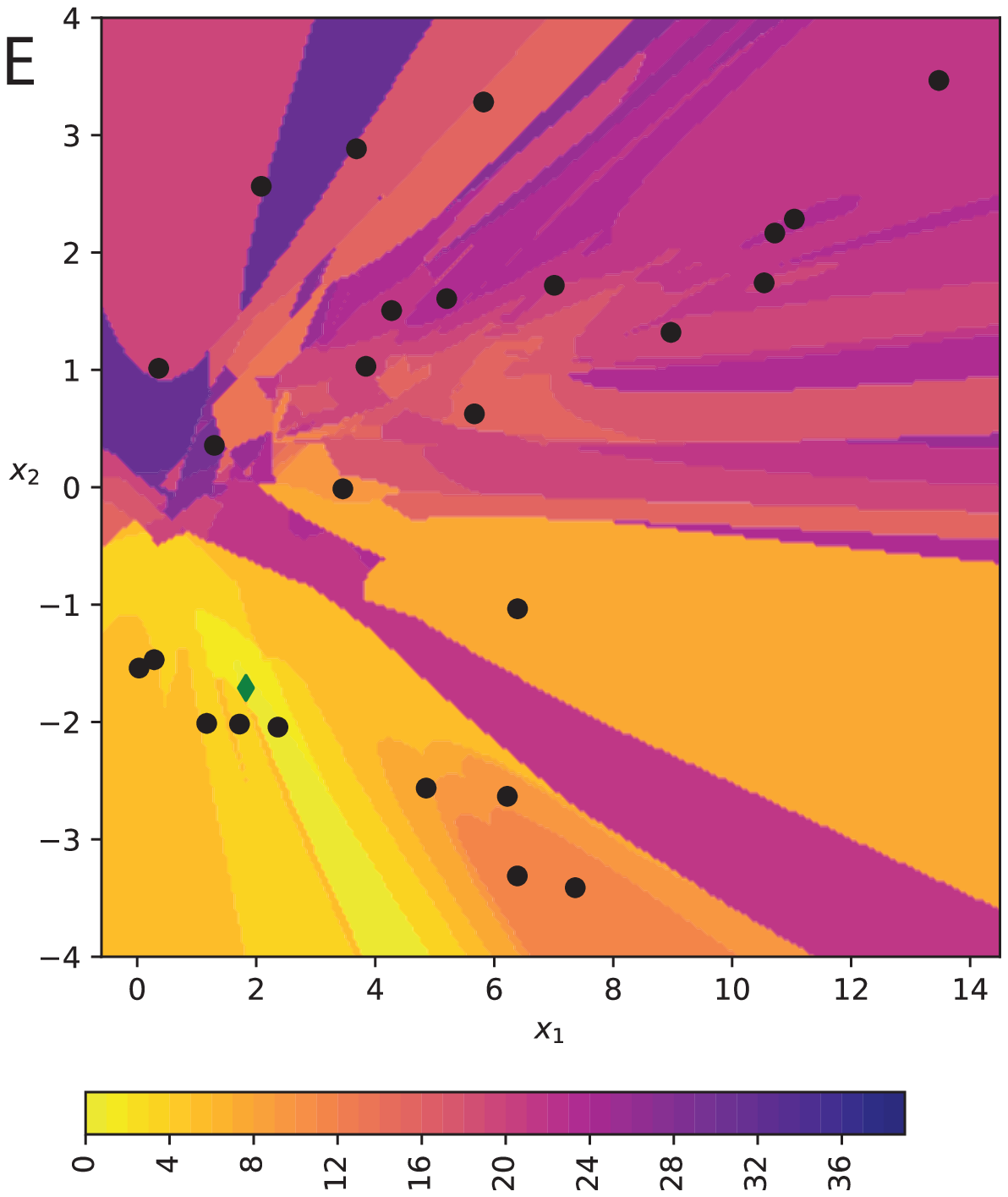}%
\includegraphics[width=.33\linewidth]{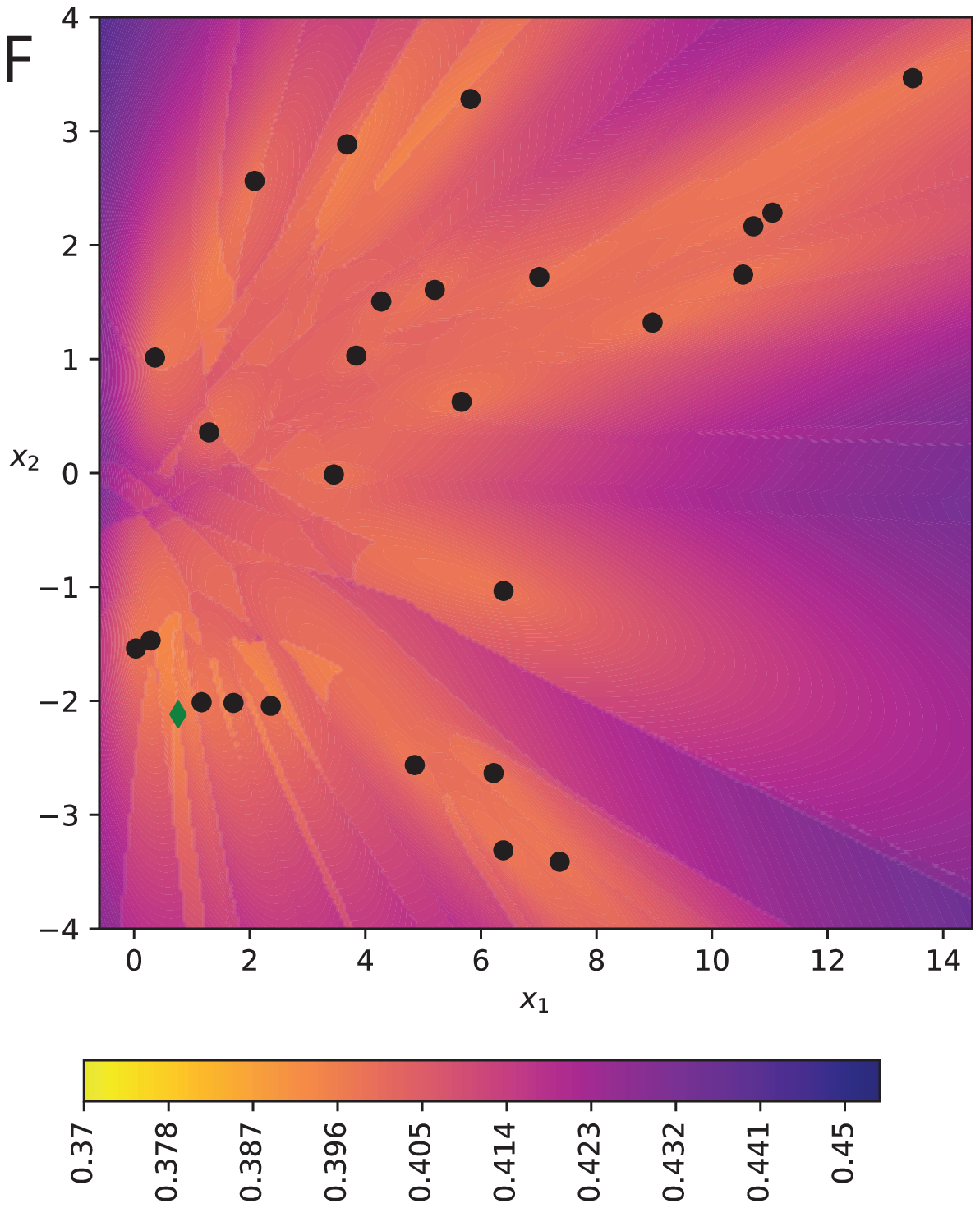}
\caption{Heat map of the of decoded tree's a, d) joint probability, b, e) symmetric difference from best tree topology and c, f) total length. Obtained by a grid search in $\mathbb{H}^{2}$, moving one node (green diamond) through embedding space.
Top row curvature $\kappa = -1$ and bottom row $\kappa = -1000$.
Black dots are the fixed locations of the remaining nodes.
}
\label{fig:grid}
\end{center}
\end{figure}

A heat map of the posterior in $\mathbb{H}^{d}$ ($d=2$) reveals that the posterior distribution is multifaceted and depends on the locations of the other nodes, figure~\ref{fig:grid}a).
It is possible to move between different topologies, however the posterior appears multi-modal and is only smooth piecewise, with distinct regions corresponding to different tree topologies.
This correspondence is apparent when compared with the middle panels, which shows the symmetric difference between the decoded tree to the optimal tree.
Coarsely, locations leading to a larger symmetric difference are further in the embedding space from the optimal location.

Consistent with corollary~\ref{cor:length_continuous}, the tree length surface is smoother with lower curvature, figure~\ref{fig:grid}cf).
With more negative curvature $\kappa=-1000$, changes in the posterior surface are less sharp, figure~\ref{fig:grid}d).
There are fewer jumps in the posterior and the regions have more uniform boundaries between them, figure~\ref{fig:grid}e).
This reflects the more local nature of the very curved space; moving one node has a smaller and more local effect on the tree.
Whereas distal nodes are almost unaffected by perturbing this node.
Sampling from a Normal distribution about the node could match the posterior surface better with this low curvature.

\subsection{Embedding Dimension}
The dimension of the hyperbolic manifold is also not prescribed and may affect the quality of the MCMC.
Figure~\ref{fig:dim} highlights that just three dimensions is optimal to perform Bayesian inference.
Increasing beyond three dimensions does not significantly affect the splits (ASDSF).
Similarly, the variance of the tree length is generally well estimated in three or more dimensions.
Two dimensions appears insufficient on both fronts to capture the phylogenetic posterior.

\begin{figure}[htbp]
\begin{center}
\includegraphics[width=.33\linewidth]{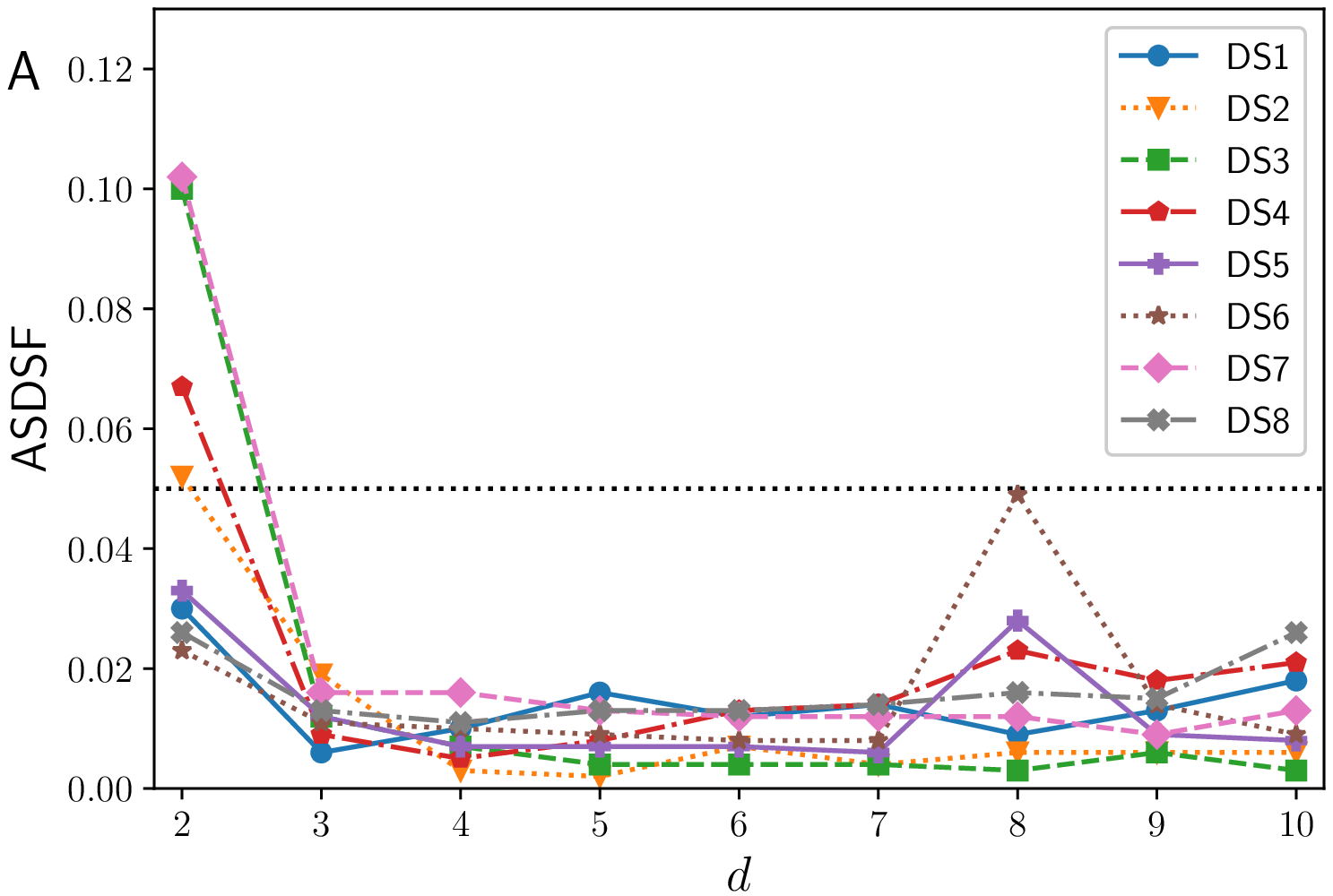}%
\includegraphics[width=.33\linewidth]{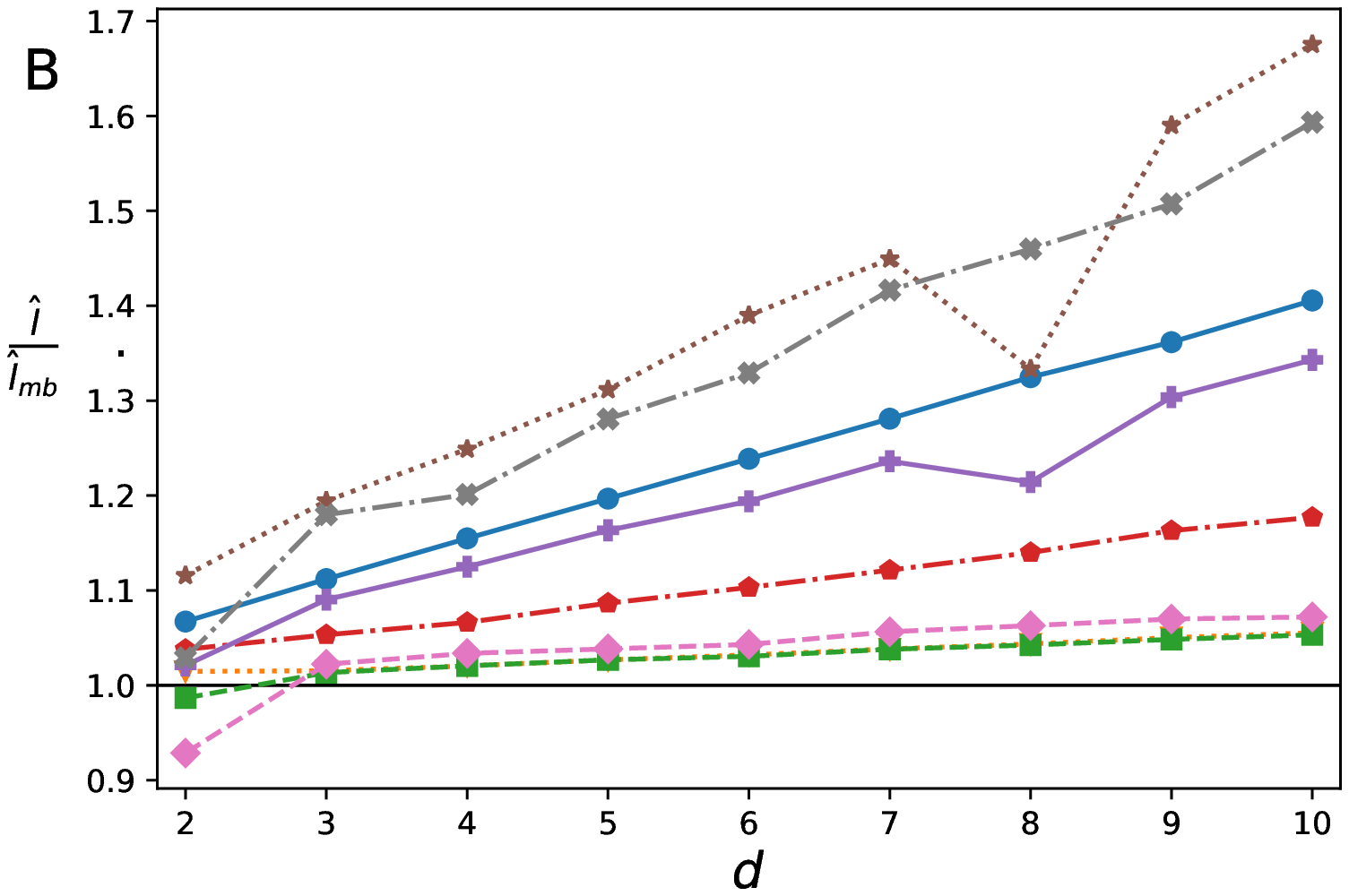}%
\includegraphics[width=.33\linewidth]{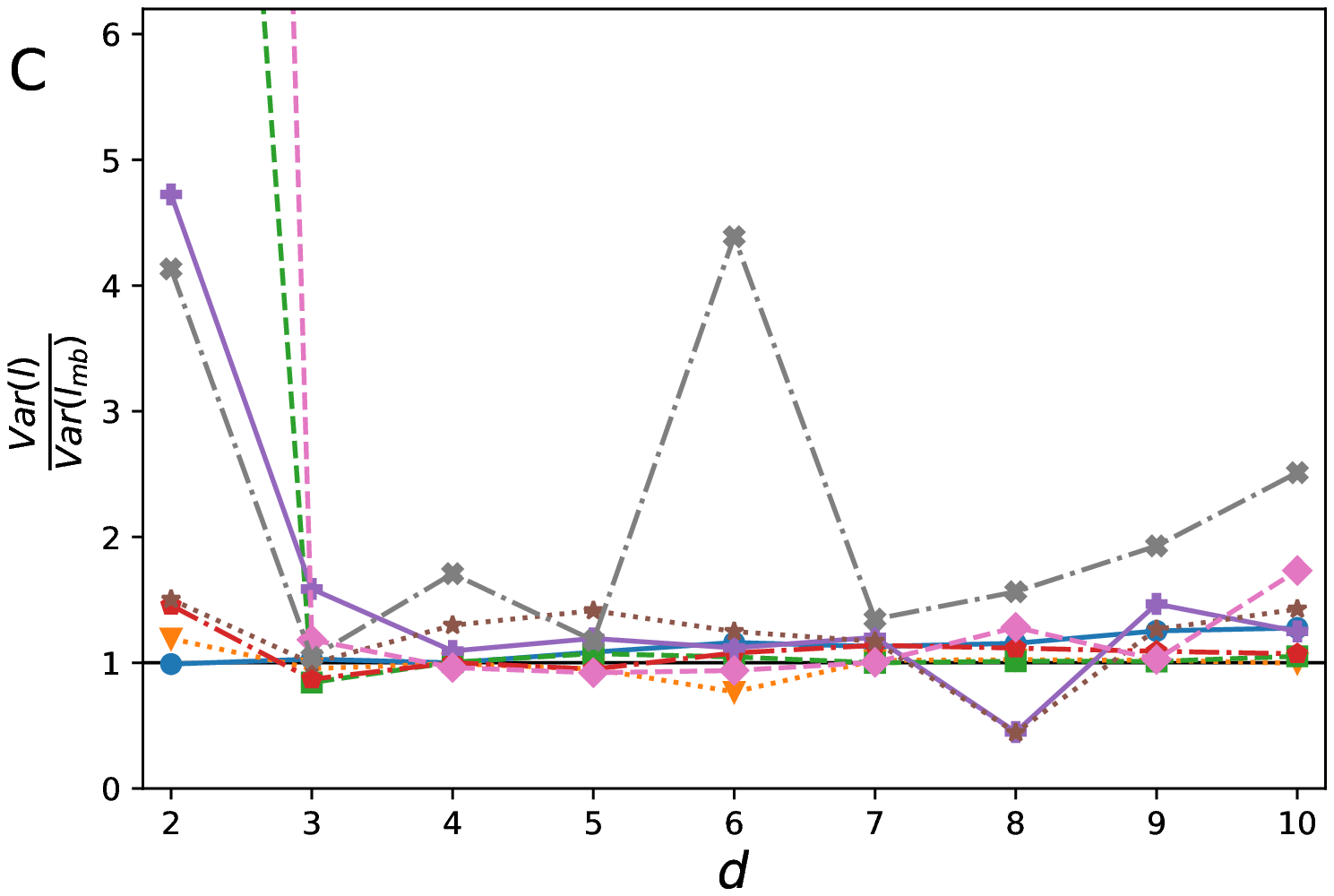}
\includegraphics[width=.33\linewidth]{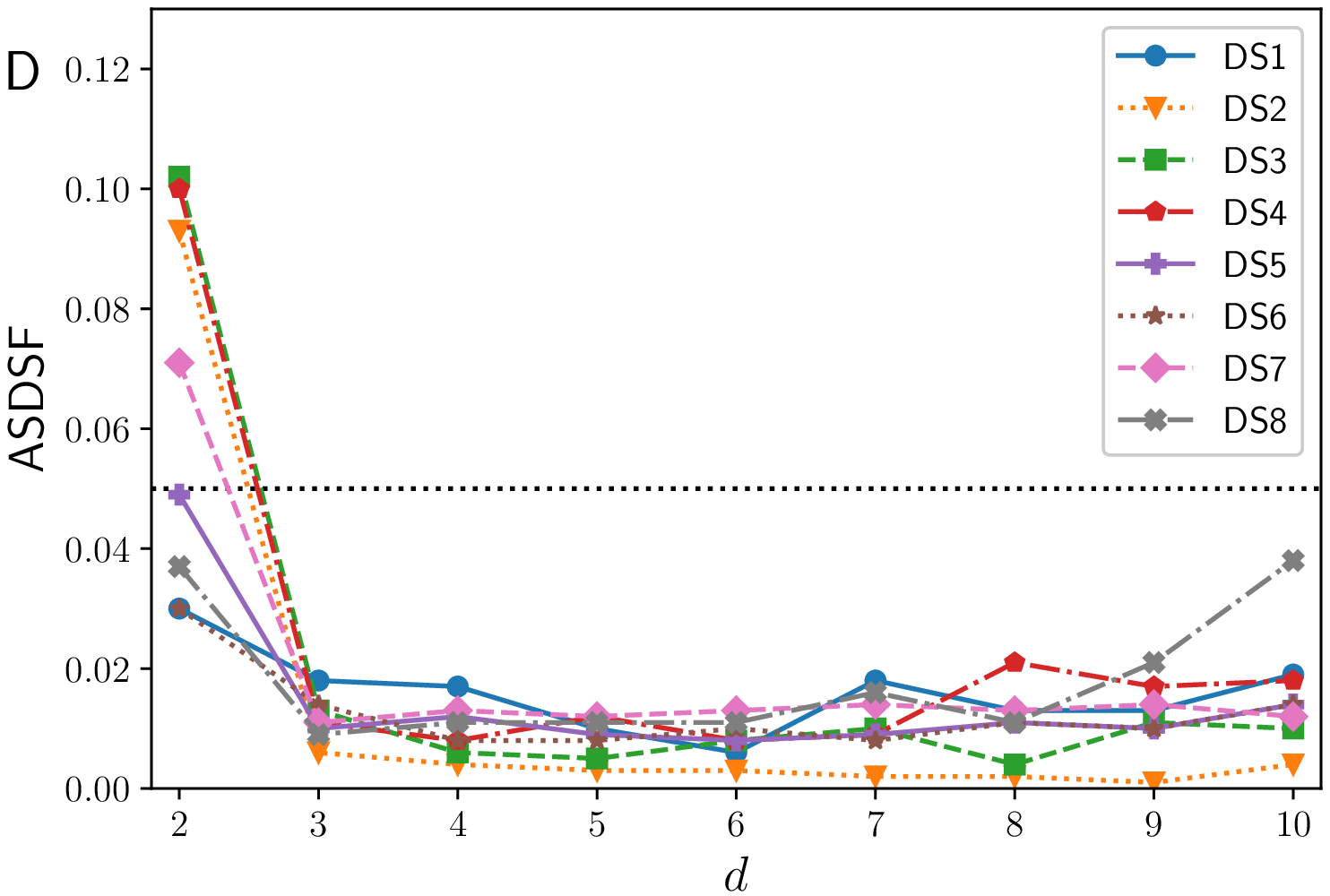}%
\includegraphics[width=.33\linewidth]{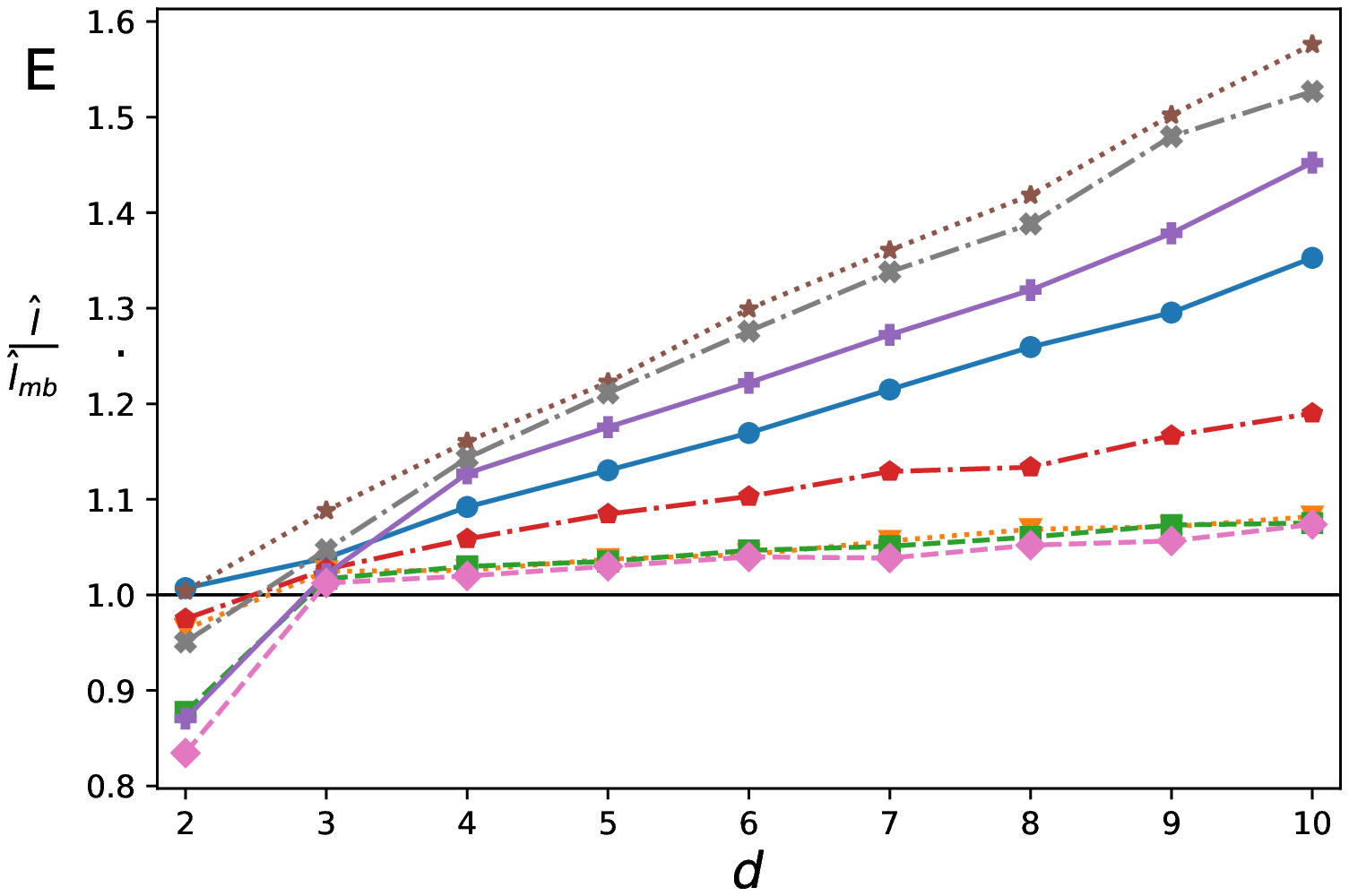}%
\includegraphics[width=.33\linewidth]{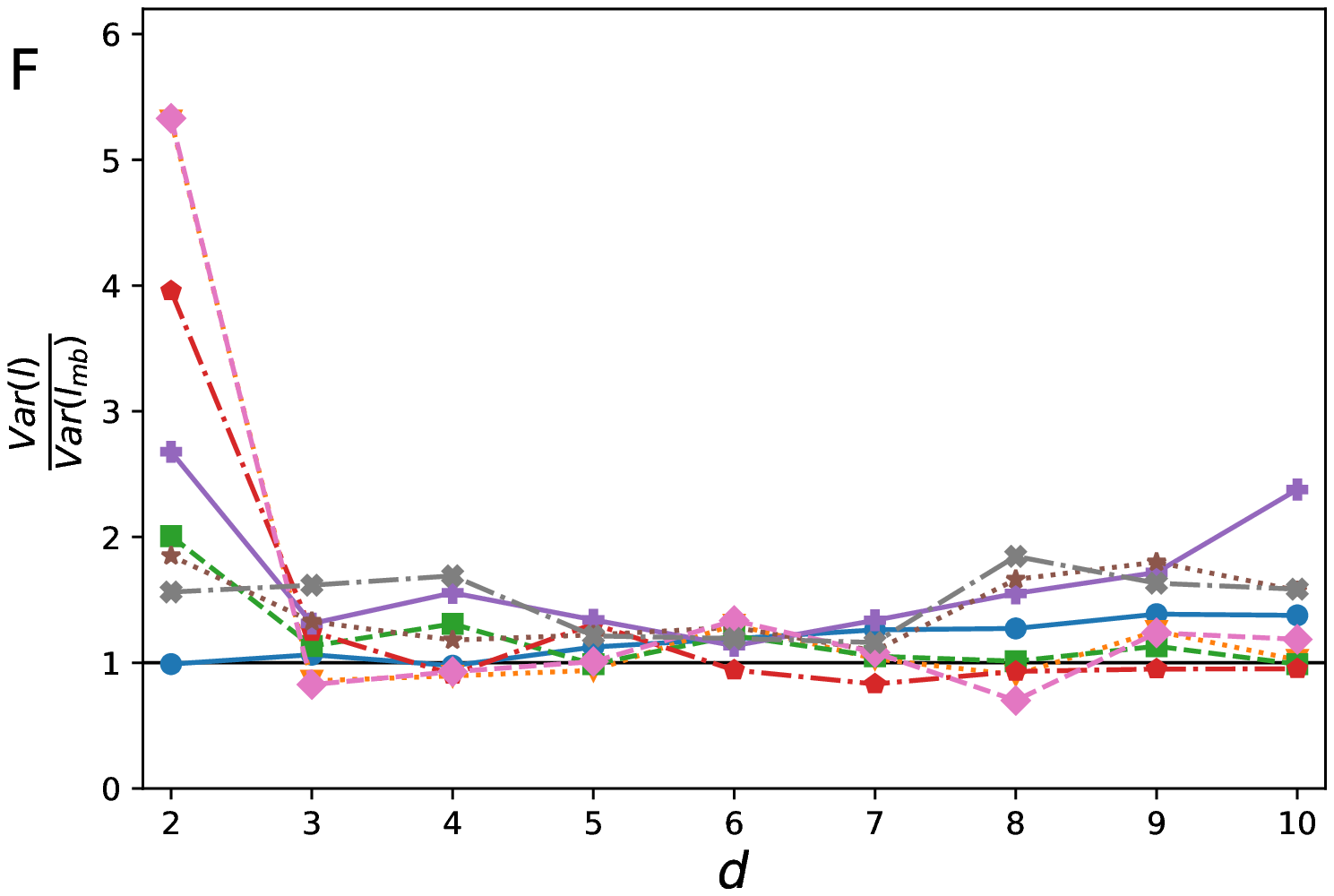}
\caption{Effect of embedding dimension on the posterior distribution.
Comparison to true posterior of:
a, d) ASDSF,
b, e) relative difference in median tree length,
c, f) relative difference in the variance of tree length.
Top row curvature $\kappa = -1$, bottom row $\kappa = -100$.
Truncated variance ratio in c) for DS3 with $d=2$ is $16.41$ and $39.05$ for DS7.
}
\label{fig:dim}
\end{center}
\end{figure}

Higher dimensions appeared to produce longer trees.
This effect appears linear in the embedding dimension and depends on the data set.
DS1, DS6 and DS8 are most affected, whereas DS2, DS3, DS4 and DS7 are least affected.
These groups correspond to the shortest and longest trees according to the median length from the golden runs respectively.
We checked that similar results are obtained with a lower curvature (Fig.~\ref{fig:dim} bottom row).

\subsection{Locality in Tree Space}
As Dodonaphy proposes new hyperbolic embeddings, the decoded tree topology freely changes according to $\text{NJ}: \{\bm{x}_{i}\} \mapsto T$.
To quantify this change we looked at the symmetric difference from the current tree state to the proposal tree for each generation.
This gives an idea of how topologically different proposal moves are from the current tree.
The symmetric difference measures how many splits are in one tree but not the other.
Looking at the cold chain in one MCMC run with $10^{6}$ proposals, most proposals make small or no changes to the tree topology, figure~\ref{fig:sym_diff}.
These include proposals from one heated chain, which are more likely to be in distant parts of the embedding space and could lead to larger symmetric differences.
Longer proposals occur less and less frequently, decaying roughly exponentially.

In addition to the topology one might expect that branch lengths within a topology could be updated far more efficiently with this approach than in classical phylogenetic MCMC where they are generally updated one-at-a-time.
This may matter more as trees grow large and practitioners start stacking in other continuous model parameters, as is done for viral phylodynamics.

\begin{figure}[htbp]
\begin{center}
\includegraphics[width=.6\linewidth]{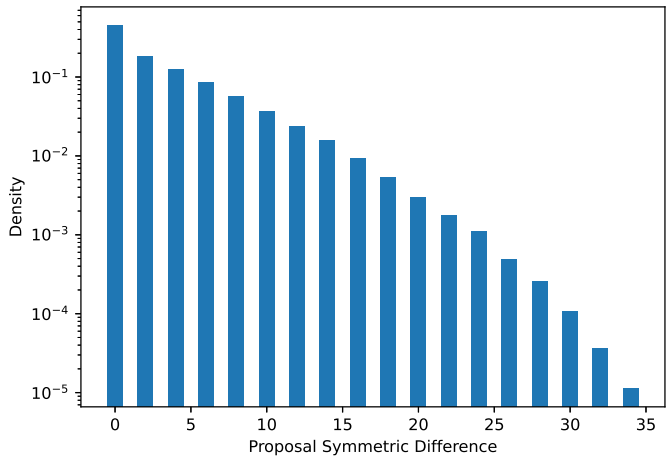}
\caption{Distribution of the proposal tree's symmetric difference to the current tree for DS1 in three dimensions with curvature $\kappa=-1$.
}
\label{fig:sym_diff}
\end{center}
\end{figure}

\subsection{Algorithm Run Time}
The three most time-consuming algorithms of Dodonaphy's MCMC are the likelihood calculation $O(nL)$, neighbour-joining $O(n^{3})$ and the pairwise distance computation $O(n^{2}d)$~\citep{chowdhary2018improved}.
We run Dodonaphy on Dell PowerEdge R640 or R740 servers with dual Intel(R) Xeon(R) Gold 6240 CPUs running at 2.60GHz.
For each run, we record the duration per chain at each generation $t$ over a range of taxa sets $n$ with varying unique site patterns $L$ and various embedding dimensions, figure~\ref{fig:times}.
Adding the three main algorithms together, the best fit with standard deviations is $a n^{3} +  b Ln + c n^{2} d$ with $a=5.086\pm 0.766 \times10^{-05} \,\text{ms}$, $b=2.529\pm 0.205 \times 10^{-04}\,\text{ms}$ and $c=6.884\pm 0.661 \times 10^{-04}\,\text{ms}$.
Assuming three dimensions and taxa with $L=1000$ unique sites, this means that the majority of the time is spent on neighbour joining when there are more than 237 taxa.

\begin{figure}[htbp]
\begin{center}
\includegraphics[width=.6\linewidth]{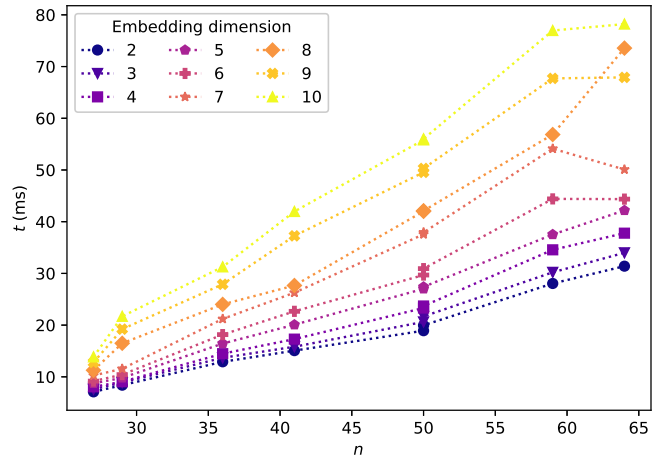}
\caption{Mean time per generation of one chain of Dodonaphy for each embedding dimension. Note that two datasets contain $n=50$ taxa.}
\label{fig:times}
\end{center}
\end{figure}

Overall, the expected performance complexity is between quadratic and cubic in the (finite) number of taxa.
Improvements to the NJ algorithm would yield the most significant speed-up as the number of taxa grows.
For example, there are fast NJ implementations suitable for hundreds of thousands of sequences~\citep{wheeler2009largescale}.
Similarly, BEAGLE provides a fast implementation of the likelihood computation~\citep{ayres2012beagle, ayres2019beagle}.
Nevertheless, asymptotically, NJ remains the algorithmic bottleneck of Dodonaphy with $O(n^{3})$.

\bigskip

\section{Discussion}
\label{sec:discuss}
\subsection{Fidelity Of Embedded MCMC}
The results indicate that Hyperbolic space offers fruitful embeddings for Bayesian phylogenetics.
A standard MCMC algorithm can use hyperbolic space to navigate the posterior phylogenetic landscape and produce the posterior tree distribution.
It can start from an embedding initialised by the NJ tree on the sequence alignment to arrive at the posterior.
Once converged, it recovers similar splits frequencies, split lengths and tree lengths to the \textit{ground-truth}.

As expected, hyperbolic embeddings performed significantly better than Euclidean embeddings.
In Euclidean space, phylogenies can be embedded by taking the square root of their distances~\citep{devienne2011euclidean}.
However, embedding $n$ taxa could require as many as $n-1$ dimensions~\citep{layer2017phylogenetic}.
This makes it less scalable and less suitable for the large datasets being produced by infectious disease surveillance programs.

Previous work on maximum likelihood phylogenetics via hyperbolic embeddings has shown that the fidelity of the embedding improves as the embedding curvature decreases~\citep{wilson2021learning}.
In line with this existing work but in a Bayesian context, we find that decreasing the curvature from zero improves the MCMC quality.
This improvement saturates at about $\kappa = -1$, indicating that the error to the four-point condition $\delta$ is negligible compared to the branch lengths.
Decreasing beyond this to a wide range of curvature $[-100, -1]$ provided good results, indicating the suitability of hyperbolic space for phylogenetic embeddings.

Lower curvatures tended to produce smaller tree length variances than the golden run.
Underestimated variances could signify that the MCMC chains are contained in local optima, which is consistent with the higher split deviation (ASDSF) from the golden run.
This suggests that there is an optimal range of curvature for embedding phylogenies, balancing between competing factors.
Decreasing the curvature makes the tree length and, to some degree, the posterior smoother.
However, a very curved embedding space becomes too localised and it is difficult to make significant changes to the tree topology.

Overestimated tree length variances indicate the MCMC is proposing trees outside the credible set, and sampled trees are either too long or too short compared to the credible set.
Generally this occurred for embeddings with an almost Euclidean curvature (close to zero), where the error on the four-point condition is larger.
This could indicate that the proposal distribution is not matching the posterior landscape well.
A ``neighbouring'' topology in the embedding space with low posterior probability could be over-sampled in these finite MCMC runs.
However, with a more curved space ($\kappa=-100$) this did not occur as significantly for three or more dimensions.

Just three hyperbolic dimensions appears sufficient to continuously represent phylogenetic trees.
Higher dimensions led to overestimated tree lengths, especially for short trees.
This overestimation occurred regardless of the embedding curvature and appeared linear in the embedding dimension.
This result highlights the suitability of low dimensional hyperbolic embeddings, regardless of the number of taxa.

One possible explanation of the tendency to overestimate branch lengths is the Jacobian involved in transforming proposal samples in hyperbolic space to samples in tree space, via neighbour joining.
The Jacobian comes from each of a series of three transformations, starting from the sample space $\mathbb{R}^d$ and ending in tree space.
Points are sampled in $\mathbb{R}^{nd}$ and 1) projected onto the hyperboloid $\mathbb{H}^{nd} \subset \mathbb{R}^{n(d+1)}$ via $\phi$, 2) then these points are transformed into distances in $\mathbb{R}^{n \choose 2}_{\geq 0}$ before 3) finally being transformed into a tree by neighbour joining.
Closed forms for the Jacobians of the first two are given in appendix~D.

However, the determinant of the Jacobian of the second transformation is necessarily zero because for each of the ${n \choose 2}$ the distances, only $2(d+1)$ of the inputs (the two embedding points) contribute to the distance.
The derivative of each distance is zero with respect to the other $(n-2)(d+1)$ embedding locations.
As we discuss below, multiple embedding configurations may produce the same set of distances, yielding a non-invertible transformation, consistent with a null Jacobian determinant.
Further, computing the Jacobian of neighbour joining is non-trivial because it requires the $\min$ function.
These Jacobian adjustments are omitted from Dodonaphy's algorithm.

One remedy for this issue is to use a prior on the embedding locations rather than on trees.
We demonstrate that using a standard multivariate Normal as a prior does alleviate this trend in appendix~C.
A drawback of the multivariate Normal prior is that it is an uncommon prior for phylogenetics, making it difficult to compare to state-of-art methods.

\subsection{Tree Considerations}
Embedding the internal nodes of a tree could lead to more degrees of freedom to explore tree space faster.
This is simple for a fixed, given topology.
However, connecting the internal nodes into a consistent bifurcating tree is a challenging problem.

Multi-furcating trees could be decoded using a (minimum) spanning tree or the approach used by SLANTIS~\citep{koptagel2022vaiphy}, which connects internal nodes using Bernoulli trials.
The efforts of~\citet{friedman2001structural} yielded a way to convert a multi-furcating tree into a bifurcating tree with the same likelihood.
One could make use of this algorithm after forming a minimum spanning tree of the embedded points.
However, the method does not maintain consistency: distances in hyperbolic space between internal nodes wouldn't translate to distances on the tree.

Dodonaphy constructs unrooted NJ trees, which is useful for exploring the relationship between taxa.
If a rooted time tree is desired, other clustering algorithms such as the UPGMA or WPGMA would produce rooted trees from the taxa embeddings.
However, these algorithms require ultra-metric distances to output a unique rooted tree, a property that hyperbolic space doesn't guarantee.
If the distances are ultra-metric, one could also add an MCMC parameter for the rate of evolution to map from evolutionary distance to time.
In this case, distance on the hyperboloid would represent time rather than substitutions per site.

\subsection{Future Research}
\subsubsection{Beyond JC69}
The JC69 model has no free parameters, so on a given topology, the branch lengths alone determine the model likelihood.
Substitution models with free parameters, such as the generalised time reversible substitution model~\citep{tavare1986probabilistic}, are not directly encoded by the embedding.
Incorporating these additional model parameters could be done as for a regular MCMC, rather than in an embedding space.

\subsubsection{Embedding Isometries}
Isometries of hyperbolic space can move an embedding to numerous locations and produce the same tree.
These form an equivalence class on trees $[T]$.
The isometries of hyperbolic space are given by the orthochronous Lorentz group $O^{+}(1, d)$.
These are characterised by the $d+1 \times d+1$ orthogonal matrices $g$ such that
\be
g^{\mathsf T}Hg = H.
\ee
For example, a hyperbolic rotation in $\mathbb{H}^{2}$ by angle $\alpha$ is 
\be
g = 
\begin{bmatrix}
\text{cosh}(\alpha)& \text{sinh}(\alpha) \\
\text{sinh}(\alpha) & \text{cosh}(\alpha)
\end{bmatrix}.
\ee
Rotating all the tip locations would leave the tip distances unaltered and decoded to the same tree.

In addition to the symmetries of hyperbolic space, the tree structure may possess symmetries.
For example, swapping the position of two taxa in a cherry could leave the decoded tree unaltered.
Accounting for symmetries in hyperbolic space and a decoded tree determines the equivalence classes of tree embeddings and is, to our knowledge, an open question.

\subsubsection{Local Tree Moves}
When the curvature is sufficiently small, neighbour joining decodes trees where the taxon-taxon distances in the decoded tree are arbitrarily close to the embedding distances, corollary~\ref{cor:length_continuous}.
As a result, small perturbations of the locations of embedded sequences produce small changes to the taxon-taxon distances in the decoded tree.
This provides a natural way to move ``locally'' in tree-space: comparing the taxon-taxon distances, rather than the discrete number of NNI moves.

Dodonaphy proposes new MCMC states by continuous probability distributions.
Here it uses a multivariate Normal distribution for proposals, however other distributions could equally be used.
We discuss an alternative proposal with similar results in appendix~B.
Matching the proposal distribution to the posterior is sought for the efficiency of MCMC chains.
Exploring additional proposal distributions could lead to a better match.

This representation of the tree space opens up the possibility of exploring the neighbourhood of trees.
Dodonaphy's representation of tree space bypasses classical topology changes such as NNI and SPR.
Relating new moves found by Dodonaphy to existing moves (NNI, SPR) could improve methods that need to propose trees, such as MCMC.

\subsubsection{Adding Taxa To an Existing Phylogeny}
Online viral phylogenetics takes genetic sequences as they arrive, i.e. sequentially.
Including these new sequences into trees in the posterior space is a matter of active research~\citep{matsen2010pplacer, fourment2018effective}.
Dodonaphy could offer the ability to place a new taxon onto a tree in constant time via the embedding space.

One simple way to approximate the embedding location $z^{'}$ of a new taxon labelled $k$ is to use its distances to a subset of the taxa.
Select a subset $Z \subseteq X$ of taxa with cardinality one more than the embedding dimension $|Z| = d + 1$.
Compose the matrix $S_{ij} = z_{i}$ containing the taxa locations on the hyperboloid $z_{i}\in \mathbb{H}^{d} \subset \mathbb{R}^{d+1}$.
Then, the genetic distances from the new taxa $k$ to the selected subset are $D_{i, k} \in \mathbb{R}^{d+1}_{\geq 0}$ for $i= 0, 1,...,  d+1$.
These distances satisfy
\be
S z^{'} = \text{cosh}(D)
\ee
which has solution $z^{'} = S^{-1}\text{cosh}(D)$ in $\mathbb{R}^{d+1}$.
This solution could be mapped onto the hyperboloid sheet by selecting the closest point, or simply adjusting the first coordinate using eq.~\ref{eq:x0}.
Incorporating new taxa in this way could lead to a fast method of adding taxa to a phylogeny.

For more accuracy, $S$ could be extended with additional taxa to become an over-determined system and solved approximately by its least squares solution $z^{'} = \left(S^{\mathsf {T}}S\right)^{-1}S^{\mathsf {T}}  \text{cosh}(D)$.
Alternatively, the subset of taxa $Z$ could be reselected multiple times, effectively working as mini-batches.
Then the most likely placement could be used based on the reconstructed super tree's probability.

\subsubsection{Differentiable Embeddings}
An exciting avenue for future research is the potential for gradient-based phylogenetic methods in the embedding space.
Differentiating the posterior probability as a function of the embedding locations would allow this.
However, gradients cannot flow through tree decoding by NJ as it is not differentiable.
The difficulty lies in selecting which two taxa to connect during NJ, which requires using the (non-differentiable) minimum function.

Decoding a differentiable tree would provide a gradient-based framework for exploring tree space.
This would be useful for point estimates of the maximum \textit{a posteriori} tree or the maximum likelihood tree.
It could also allow gradient-based Bayesian methods such as Hamiltonian Monte Carlo (HMC) and variational inference, which we discuss below.

Hamiltonian Monte Carlo assigns a mass to each tip position in the embedding space and draws on Hamiltonian mechanics to cover more of the posterior space in fewer iterations.
For tree space, moving through the discrete barrier between topologies with momentum is still challenging.
\citet{dinh2017probabilistic} propose a high dimensional $d=(2n-3)!!$ solution using a surrogate function to smoothly bridge the potential energy between topologies.
Dodonaphy offers a low dimensional embedding space for HMC.
Whilst the posterior may only be piecewise smooth in the embedding space, some methods cope with these transitions~\citep{mohaselafshar2015reflection, dinh2017probabilistic}.

Variational inference approximates the posterior distribution with much simpler analytical distributions.
The goal is to minimise the error of the approximation via optimisation.
Trees are discrete objects so this is usually considered a categorical distribution.
For example, recent works appeal to graphical models to encode such categorical distributions~\citep{zhang2018generalizing, zhang2019variational}.
In contrast, continuous distributions for the location of each node in the embedding space provide continuous distributions in tree space.
Making use of distributions in hyperbolic space could open up an alternative technique for gradient-based variational inference.

\section{Conclusions}
\label{sec:conc}
Hyperbolic space offers a fruitful embedding space for Bayesian phylogenetics.
Performing MCMC on eight data sets captured the splits (ASDSF $< 0.05$) and median tree lengths within $20\%$ on all datasets.
Embedding in three dimensions with a curvature between $-100 \leq \kappa \leq -1$ is sufficient to attain these results.
The MCMC can start from an embedding of a NJ tree on the sequence data and move throughout hyperbolic space to explore the posterior space of trees.

Phylogenetic embeddings allow MCMC algorithms to propose new states (topology and branch lengths) from continuous probability distributions, which provides a novel way to make ``local'' changes in tree space.

\section{Funding}
This work was supported by the Australian Government through the Australian Research Council (project number LP180100593).

\section{Acknowledgements}
Computational facilities were provided by the UTS eResearch High Performance Computer Cluster.

\section{Conflict of Interest}
All authors declare that they have no conflicts of interest.


\bigskip\bigskip


\bibliography{ms}

\begin{thebibliography}{42}
\providecommand{\natexlab}[1]{#1}
\providecommand{\url}[1]{\texttt{#1}}
\expandafter\ifx\csname urlstyle\endcsname\relax
  \providecommand{\doi}[1]{doi: #1}\else
  \providecommand{\doi}{doi: \begingroup \urlstyle{rm}\Url}\fi

\bibitem[Altekar et~al.(2004)Altekar, Dwarkadas, Huelsenbeck, and
  Ronquist]{altekar2004parallel}
G.~Altekar, S.~Dwarkadas, J.~P. Huelsenbeck, and F.~Ronquist.
\newblock Parallel {{Metropolis}} coupled {{Markov}} chain {{Monte Carlo}} for
  {{Bayesian}} phylogenetic inference.
\newblock \emph{Bioinformatics}, 20\penalty0 (3):\penalty0 407--415, Feb. 2004.
\newblock ISSN 1367-4803, 1460-2059.
\newblock \doi{10.1093/bioinformatics/btg427}.

\bibitem[Atchadé and Rosenthal(2005)]{atchade2005adaptive}
Y.~F. Atchadé and J.~S. Rosenthal.
\newblock On adaptive {Markov} chain {Monte} {Carlo} algorithms.
\newblock \emph{Bernoulli}, 11\penalty0 (5):\penalty0 815--828, Oct. 2005.
\newblock ISSN 1350-7265.
\newblock \doi{10.3150/bj/1130077595}.
\newblock URL
  \url{https://projecteuclid.org/journals/bernoulli/volume-11/issue-5/On-adaptive-Markov-chain-Monte-Carlo-algorithms/10.3150/bj/1130077595.full}.
\newblock Publisher: Bernoulli Society for Mathematical Statistics and
  Probability.

\bibitem[Atteson(1999)]{atteson1999performance}
K.~Atteson.
\newblock The {{Performance}} of {{Neighbor-Joining Methods}} of {{Phylogenetic
  Reconstruction}}.
\newblock \emph{Algorithmica}, 25\penalty0 (2-3):\penalty0 251--278, June 1999.
\newblock ISSN 0178-4617.
\newblock \doi{10.1007/PL00008277}.

\bibitem[Ayres et~al.(2012)Ayres, Darling, Zwickl, Beerli, Holder, Lewis,
  Huelsenbeck, Ronquist, Swofford, Cummings, Rambaut, and
  Suchard]{ayres2012beagle}
D.~L. Ayres, A.~Darling, D.~J. Zwickl, P.~Beerli, M.~T. Holder, P.~O. Lewis,
  J.~P. Huelsenbeck, F.~Ronquist, D.~L. Swofford, M.~P. Cummings, A.~Rambaut,
  and M.~A. Suchard.
\newblock {{BEAGLE}}: {{An Application Programming Interface}} and
  {{High-Performance Computing Library}} for {{Statistical Phylogenetics}}.
\newblock \emph{Systematic Biology}, 61\penalty0 (1):\penalty0 170--173, Jan.
  2012.
\newblock ISSN 1063-5157.
\newblock \doi{10.1093/sysbio/syr100}.

\bibitem[Ayres et~al.(2019)Ayres, Cummings, Baele, Darling, Lewis, Swofford,
  Huelsenbeck, Lemey, Rambaut, and Suchard]{ayres2019beagle}
D.~L. Ayres, M.~P. Cummings, G.~Baele, A.~E. Darling, P.~O. Lewis, D.~L.
  Swofford, J.~P. Huelsenbeck, P.~Lemey, A.~Rambaut, and M.~A. Suchard.
\newblock {BEAGLE} 3: {Improved} {Performance}, {Scaling}, and {Usability} for
  a {High}-{Performance} {Computing} {Library} for {Statistical}
  {Phylogenetics}.
\newblock \emph{Systematic Biology}, 68\penalty0 (6):\penalty0 1052--1061, Nov.
  2019.
\newblock ISSN 1063-5157, 1076-836X.
\newblock \doi{10.1093/sysbio/syz020}.
\newblock URL \url{https://academic.oup.com/sysbio/article/68/6/1052/5477405}.

\bibitem[Chami et~al.(2020)Chami, Gu, Chatziafratis, and
  R{\'e}]{NEURIPS2020_ac10ec1a}
I.~Chami, A.~Gu, V.~Chatziafratis, and C.~R{\'e}.
\newblock From trees to continuous embeddings and back: {{Hyperbolic}}
  hierarchical clustering.
\newblock In H.~Larochelle, M.~Ranzato, R.~Hadsell, M.~F. Balcan, and H.~Lin,
  editors, \emph{Advances in Neural Information Processing Systems}, volume~33,
  pages 15065--15076. {Curran Associates, Inc.}, 2020.

\bibitem[Chowdhary and Kolda(2018)]{chowdhary2018improved}
K.~Chowdhary and T.~G. Kolda.
\newblock An improved hyperbolic embedding algorithm.
\newblock \emph{Journal of Complex Networks}, 6\penalty0 (3):\penalty0
  321--341, July 2018.
\newblock ISSN 2051-1329.
\newblock \doi{10.1093/comnet/cnx034}.

\bibitem[{de Vienne} et~al.(2011){de Vienne}, Aguileta, and
  Ollier]{devienne2011euclidean}
D.~M. {de Vienne}, G.~Aguileta, and S.~Ollier.
\newblock Euclidean {{Nature}} of {{Phylogenetic Distance Matrices}}.
\newblock \emph{Systematic Biology}, 60\penalty0 (6):\penalty0 826--832, Dec.
  2011.
\newblock ISSN 1063-5157.
\newblock \doi{10.1093/sysbio/syr066}.

\bibitem[Dinh et~al.(2017)Dinh, Bilge, Zhang, and
  Matsen]{dinh2017probabilistic}
V.~Dinh, A.~Bilge, C.~Zhang, and F.~A. Matsen.
\newblock Probabilistic {{Path Hamiltonian Monte Carlo}}.
\newblock In \emph{Machine {{Learning}}}, volume~70 of \emph{Proceedings of
  {{Machine Learning Research}}}, page~10, {International Convention Centre,
  Sydney, Australia}, 2017. {PMLR}.

\bibitem[Felsenstein(1973)]{felsenstein1973maximum}
J.~Felsenstein.
\newblock Maximum {{Likelihood}} and {{Minimum-Steps Methods}} for {{Estimating
  Evolutionary Trees}} from {{Data}} on {{Discrete Characters}}.
\newblock \emph{Systematic Biology}, 22\penalty0 (3):\penalty0 240--249, Sept.
  1973.
\newblock ISSN 1063-5157.
\newblock \doi{10.1093/sysbio/22.3.240}.

\bibitem[Fourment et~al.(2018)Fourment, Claywell, Dinh, McCoy, Matsen~IV, and
  Darling]{fourment2018effective}
M.~Fourment, B.~C. Claywell, V.~Dinh, C.~McCoy, F.~A. Matsen~IV, and A.~E.
  Darling.
\newblock Effective {{Online Bayesian Phylogenetics}} via {{Sequential Monte
  Carlo}} with {{Guided Proposals}}.
\newblock \emph{Systematic Biology}, 67\penalty0 (3):\penalty0 490--502, May
  2018.
\newblock ISSN 1063-5157, 1076-836X.
\newblock \doi{10.1093/sysbio/syx090}.

\bibitem[Friedman et~al.(2001)Friedman, Ninio, Pe'er, and
  Pupko]{friedman2001structural}
N.~Friedman, M.~Ninio, I.~Pe'er, and T.~Pupko.
\newblock A {{Structural EM Algorithm}} for {{Phylogenetic Inference}}.
\newblock In \emph{{{RECOMB01}}: {{The Fifth Annual International Conference}}
  on {{Computational Molecular Biology}}}, {{RECOMB}} '01, pages 132--140,
  {Montreal, Quebec, Canada}, 2001. {Association for Computing Machinery}.
\newblock \doi{10.1145/369133.369182}.

\bibitem[Geyer(1991)]{geyer1991markov}
C.~J. Geyer.
\newblock Markov {{Chain Monte Carlo Maximum Likelihood}}.
\newblock {Interface Foundation of North America}, 1991.

\bibitem[Gu et~al.(2018)Gu, Sala, Gunel, and R{\'e}]{gu2018learningb}
A.~Gu, F.~Sala, B.~Gunel, and C.~R{\'e}.
\newblock Learning {{Mixed-Curvature Representations}} in {{Product Spaces}}.
\newblock In \emph{International {{Conference}} on {{Learning
  Representations}}}, Sept. 2018.

\bibitem[Harrington et~al.(2021)Harrington, Wishingrad, and
  Thomson]{harrington2021properties}
S.~M. Harrington, V.~Wishingrad, and R.~C. Thomson.
\newblock Properties of {{Markov Chain Monte Carlo Performance}} across {{Many
  Empirical Alignments}}.
\newblock \emph{Molecular Biology and Evolution}, 38\penalty0 (4):\penalty0
  1627--1640, Apr. 2021.
\newblock ISSN 1537-1719.
\newblock \doi{10.1093/molbev/msaa295}.

\bibitem[H{\"o}hna and Drummond(2012)]{hohna2012guided}
S.~H{\"o}hna and A.~J. Drummond.
\newblock Guided {{Tree Topology Proposals}} for {{Bayesian Phylogenetic
  Inference}}.
\newblock \emph{Systematic Biology}, 61\penalty0 (1):\penalty0 1--11, Jan.
  2012.
\newblock ISSN 1076-836X, 1063-5157.
\newblock \doi{10.1093/sysbio/syr074}.

\bibitem[Iuchi et~al.(2021)Iuchi, Matsutani, Yamada, Iwano, Sumi, Hosoda, Zhao,
  Fukunaga, and Hamada]{Iuchi2021representation}
H.~Iuchi, T.~Matsutani, K.~Yamada, N.~Iwano, S.~Sumi, S.~Hosoda, S.~Zhao,
  T.~Fukunaga, and M.~Hamada.
\newblock Representation learning applications in biological sequence analysis.
\newblock \emph{bioRxiv}, page 2021.02.26.433129, Feb. 2021.
\newblock \doi{10.1101/2021.02.26.433129}.

\bibitem[Jukes and Cantor(1969)]{jukes1969evolution}
T.~H. Jukes and C.~R. Cantor.
\newblock Evolution of protein molecules.
\newblock \emph{Mammalian protein metabolism}, 3:\penalty0 21--132, 1969.

\bibitem[Keller-Ressel and Nargang(2020)]{keller-ressel2020hydra}
M.~Keller-Ressel and S.~Nargang.
\newblock Hydra: a method for strain-minimizing hyperbolic embedding of
  network- and distance-based data.
\newblock \emph{Journal of Complex Networks}, 8\penalty0 (1):\penalty0 cnaa002,
  Feb. 2020.
\newblock ISSN 2051-1329.
\newblock \doi{10.1093/comnet/cnaa002}.
\newblock URL \url{https://doi.org/10.1093/comnet/cnaa002}.

\bibitem[Ki and Terhorst(2022)]{ki2022variational}
C.~Ki and J.~Terhorst.
\newblock Variational phylodynamic inference using pandemic-scale data.
\newblock \emph{bioRxiv}, Feb. 2022.
\newblock \doi{10.1101/2022.02.10.479891}.
\newblock URL
  \url{https://www.biorxiv.org/content/early/2022/02/17/2022.02.10.479891}.

\bibitem[Koptagel et~al.(2022)Koptagel, Kviman, Melin, Safinianaini, and
  Lagergren]{koptagel2022vaiphy}
H.~Koptagel, O.~Kviman, H.~Melin, N.~Safinianaini, and J.~Lagergren.
\newblock {{VaiPhy}}: A {{Variational Inference Based Algorithm}} for
  {{Phylogeny}}.
\newblock \emph{arXiv preprint arXiv:2203.01121}, Mar. 2022.
\newblock \doi{10.48550/arXiv.2203.01121}.

\bibitem[Lakner et~al.(2008)Lakner, {van der Mark}, Huelsenbeck, Larget, and
  Ronquist]{lakner2008efficiency}
C.~Lakner, P.~{van der Mark}, J.~P. Huelsenbeck, B.~Larget, and F.~Ronquist.
\newblock Efficiency of {{Markov Chain Monte Carlo Tree Proposals}} in
  {{Bayesian Phylogenetics}}.
\newblock \emph{Systematic Biology}, 57\penalty0 (1):\penalty0 86--103, Feb.
  2008.
\newblock ISSN 1063-5157.
\newblock \doi{10.1080/10635150801886156}.

\bibitem[Larget and Simon(1999)]{larget1999markov}
B.~Larget and D.~L. Simon.
\newblock Markov {{Chain Monte Carlo Algorithms}} for the {{Bayesian Analysis}}
  of {{Phylogenetic Trees}}.
\newblock \emph{Molecular Biology and Evolution}, 16:\penalty0 11, 1999.

\bibitem[Layer and Rhodes(2017)]{layer2017phylogenetic}
M.~Layer and J.~A. Rhodes.
\newblock Phylogenetic trees and {{Euclidean}} embeddings.
\newblock \emph{Journal of Mathematical Biology}, 74\penalty0 (1-2):\penalty0
  99--111, Jan. 2017.
\newblock ISSN 0303-6812, 1432-1416.
\newblock \doi{10.1007/s00285-016-1018-0}.

\bibitem[Matsen et~al.(2010)Matsen, Kodner, and Armbrust]{matsen2010pplacer}
F.~A. Matsen, R.~B. Kodner, and E.~V. Armbrust.
\newblock Pplacer: Linear time maximum-likelihood and {{Bayesian}} phylogenetic
  placement of sequences onto a fixed reference tree.
\newblock \emph{BMC Bioinformatics}, 11\penalty0 (1):\penalty0 538, Dec. 2010.
\newblock ISSN 1471-2105.
\newblock \doi{10.1186/1471-2105-11-538}.

\bibitem[Matsumoto et~al.(2021)Matsumoto, Mimori, and
  Fukunaga]{matsumoto2021novel}
H.~Matsumoto, T.~Mimori, and T.~Fukunaga.
\newblock Novel metric for hyperbolic phylogenetic tree embeddings.
\newblock \emph{Biology Methods and Protocols}, 6\penalty0 (1):\penalty0
  bpab006, Jan. 2021.
\newblock ISSN 2396-8923.
\newblock \doi{10.1093/biomethods/bpab006}.

\bibitem[Mohasel~Afshar and Domke(2015)]{mohaselafshar2015reflection}
H.~Mohasel~Afshar and J.~Domke.
\newblock Reflection, {{Refraction}}, and {{Hamiltonian Monte Carlo}}.
\newblock In \emph{Advances in {{Neural Information Processing Systems}}},
  volume~28. {Curran Associates, Inc.}, 2015.

\bibitem[Monath et~al.(2019)Monath, Zaheer, Silva, McCallum, and
  Ahmed]{monath2019gradientbased}
N.~Monath, M.~Zaheer, D.~Silva, A.~McCallum, and A.~Ahmed.
\newblock Gradient-based {{Hierarchical Clustering}} using {{Continuous
  Representations}} of {{Trees}} in {{Hyperbolic Space}}.
\newblock In \emph{Proceedings of the 25th {{ACM SIGKDD International
  Conference}} on {{Knowledge Discovery}} \& {{Data Mining}}}, pages 714--722,
  {Anchorage AK USA}, July 2019. {ACM}.
\newblock ISBN 978-1-4503-6201-6.
\newblock \doi{10.1145/3292500.3330997}.

\bibitem[Nagano et~al.(2019)Nagano, Yamaguchi, Fujita, and
  Koyama]{nagano2019wrapped}
Y.~Nagano, S.~Yamaguchi, Y.~Fujita, and M.~Koyama.
\newblock A {{Wrapped Normal Distribution}} on {{Hyperbolic Space}} for
  {{Gradient-Based Learning}}.
\newblock In \emph{International {{Conference}} on {{Machine Learning}}}, pages
  4693--4702. {PMLR}, May 2019.

\bibitem[Rannala et~al.(2012)Rannala, Zhu, and Yang]{rannala2012tail}
B.~Rannala, T.~Zhu, and Z.~Yang.
\newblock Tail {{Paradox}}, {{Partial Identifiability}}, and {{Influential
  Priors}} in {{Bayesian Branch Length Inference}}.
\newblock \emph{Molecular Biology and Evolution}, 29\penalty0 (1):\penalty0
  325--335, Jan. 2012.
\newblock ISSN 0737-4038.
\newblock \doi{10.1093/molbev/msr210}.

\bibitem[Ronquist and Huelsenbeck(2003)]{ronquist2003mrbayes}
F.~Ronquist and J.~P. Huelsenbeck.
\newblock {{MrBayes}} 3: {{Bayesian}} phylogenetic inference under mixed
  models.
\newblock \emph{Bioinformatics}, 19\penalty0 (12):\penalty0 1572--1574, Aug.
  2003.
\newblock ISSN 1367-4803, 1460-2059.
\newblock \doi{10.1093/bioinformatics/btg180}.

\bibitem[Saitou and Nei(1987)]{saitou1987neighborjoining}
N.~Saitou and M.~Nei.
\newblock The neighbor-joining method: A new method for reconstructing
  phylogenetic trees.
\newblock \emph{Molecular Biology and Evolution}, 4\penalty0 (4):\penalty0
  406--425, July 1987.
\newblock ISSN 0737-4038.
\newblock \doi{10.1093/oxfordjournals.molbev.a040454}.

\bibitem[Tavare(1986)]{tavare1986probabilistic}
S.~Tavare.
\newblock Some probabilistic and statistical problems in the analysis of
  {{DNA}} sequences.
\newblock \emph{American Mathematical Society}, 17\penalty0 (2):\penalty0
  57--86, 1986.

\bibitem[Vihola(2012)]{vihola2012robust}
M.~Vihola.
\newblock Robust adaptive {{Metropolis}} algorithm with coerced acceptance
  rate.
\newblock \emph{Statistics and Computing}, 22\penalty0 (5):\penalty0 997--1008,
  Sept. 2012.
\newblock ISSN 0960-3174, 1573-1375.
\newblock \doi{10.1007/s11222-011-9269-5}.

\bibitem[Vos et~al.(2012)Vos, Balhoff, Caravas, Holder, Lapp, Maddison,
  Midford, Priyam, Sukumaran, Xia, and Stoltzfus]{vos2012nexml}
R.~A. Vos, J.~P. Balhoff, J.~A. Caravas, M.~T. Holder, H.~Lapp, W.~P. Maddison,
  P.~E. Midford, A.~Priyam, J.~Sukumaran, X.~Xia, and A.~Stoltzfus.
\newblock {{NeXML}}: {{Rich}}, {{Extensible}}, and {{Verifiable
  Representation}} of {{Comparative Data}} and {{Metadata}}.
\newblock \emph{Systematic Biology}, 61\penalty0 (4):\penalty0 675--689, July
  2012.
\newblock ISSN 1063-5157.
\newblock \doi{10.1093/sysbio/sys025}.

\bibitem[Wheeler(2009)]{wheeler2009largescale}
T.~J. Wheeler.
\newblock Large-{{Scale Neighbor-Joining}} with {{NINJA}}.
\newblock In S.~L. Salzberg and T.~Warnow, editors, \emph{Algorithms in
  {{Bioinformatics}}}, pages 375--389, {Berlin, Heidelberg}, 2009. {Springer}.
\newblock ISBN 978-3-642-04241-6.
\newblock \doi{10.1007/978-3-642-04241-6_31}.

\bibitem[Whidden and Matsen(2015)]{whidden2015quantifying}
C.~Whidden and F.~A. Matsen.
\newblock Quantifying {{MCMC Exploration}} of {{Phylogenetic Tree Space}}.
\newblock \emph{Systematic Biology}, 64\penalty0 (3):\penalty0 472--491, May
  2015.
\newblock ISSN 1076-836X, 1063-5157.
\newblock \doi{10.1093/sysbio/syv006}.

\bibitem[Whidden et~al.(2020)Whidden, Claywell, Fisher, Magee, Fourment, and
  Matsen]{whidden2020systematic}
C.~Whidden, B.~C. Claywell, T.~Fisher, A.~F. Magee, M.~Fourment, and F.~A.
  Matsen, IV.
\newblock Systematic {{Exploration}} of the {{High Likelihood Set}} of
  {{Phylogenetic Tree Topologies}}.
\newblock \emph{Systematic Biology}, 69\penalty0 (2):\penalty0 280--293, Mar.
  2020.
\newblock ISSN 1063-5157.
\newblock \doi{10.1093/sysbio/syz047}.

\bibitem[Wilson(2021)]{wilson2021learning}
B.~Wilson.
\newblock Learning phylogenetic trees as hyperbolic point configurations.
\newblock \emph{arXiv:2104.11430 [cs]}, Apr. 2021.

\bibitem[Yang and Rannala(1997)]{yang1997bayesian}
Z.~Yang and B.~Rannala.
\newblock Bayesian phylogenetic inference using {{DNA}} sequences: A {{Markov
  Chain Monte Carlo Method}}.
\newblock \emph{Molecular Biology and Evolution}, 14\penalty0 (7):\penalty0
  717--724, July 1997.
\newblock ISSN 0737-4038.
\newblock \doi{10.1093/oxfordjournals.molbev.a025811}.

\bibitem[Zhang and Matsen(2018)]{zhang2018generalizing}
C.~Zhang and F.~A. Matsen.
\newblock Generalizing {{Tree Probability Estimation}} via {{Bayesian
  Networks}}.
\newblock In \emph{Advances in {{Neural Information Processing Systems}}},
  volume~31, pages 1444--1453, {Montr\'eal, Canada}, 2018. {Curran Associates
  Inc}.

\bibitem[Zhang and Matsen(2019)]{zhang2019variational}
C.~Zhang and F.~A. Matsen.
\newblock Variational {{Bayesian}} phylogenetic inference.
\newblock In \emph{International {{Conference}} on {{Learning
  Representations}}}, page~15, 2019.

\end{thebibliography}





\appendix

\section{A: Tuning MCMC} \label{app:tune}
A simple method for tuning the covariance matrix in MCMC to achieve a target acceptance rate $a^{*}$ can be achieved as follows.
Given an initial covariance matrix $\Sigma$, we scale it by a factor $s$ that is tuned.
Assume that increasing the step size decreases the acceptance rate and vice versa.
We solve the following ordinary differential equation for the step size $s$:
\be
\frac{ds}{da} = a - a^{*}
\ee
using the Euler method:
\begin{equation} \label{eq:euler}
s_{n+1} = s_{n} + \eta (a_{n} - a^{*}).
\end{equation}
The learning rate $\eta = (1+n)^{-\lambda}$ decays to zero.
This ensures that as $n \to \infty$ the step size is constant and the target distribution is sampled.
We make the simple choice of the small multiple $\zeta=0.1$ of identity matrix as the initial covariance and choose $\lambda = 0.5$.

This method is similar to the adaptive scaling metropolis (ASM) algorithm~\citep{atchade2005adaptive} with a small difference.
In Eq.~\ref{eq:euler}, replacing the acceptance rate over all past states $a_{n}$ by the acceptance probability of the current proposal $\alpha$ reproduces ASM.
We find that using $a_{n}$ leads to better performance for the warm up phase, giving faster convergence and an acceptance rate closer to the target acceptance of $0.234$.

\section{B: Wrapping onto the Hyperboloid} \label{app:wrap}
\citet{nagano2019wrapped} present a method to wrap Euclidean vectors onto the hyperboloid.
This allows Dodonaphy to sample vectors drawn from arbitrary distributions in hyperbolic space.
We provide a brief outline of this method and use it for proposals within Dodonaphy to compare results.

The method has two steps, parallel transport of vectors tangent to the hyperboloid and then exponentially mapping from the tangent space onto the hyperboloid.
The tangent space of a point $\mu$ on the hyperboloid is the set of vectors on the tangent plane at $\mu$.
These are the points orthogonal to $\mu$ in $\mathbb{R}^{d+1}$:
\be
T_{\mu}\mathbb{H}^{d} := \{x \in \mathbb{R}^{d+1}: \langle x, \mu \rangle= 0\} .
\ee
Parallel transport takes a vector in one tangent plane $x \in T_{\nu}\mathbb{H}^{d}$ and places it in another tangent plane $T_{\mu}\mathbb{H}^{d}$ whilst conserving the vector's metric.
It is given by:
\be
\text{PT}_{\nu \to \mu} (x) = x + \frac{\langle \mu - \alpha \nu, x \rangle}{\alpha + 1} (\nu + \mu)
\ee
where $\alpha = -\langle \nu, \mu \rangle$

Then, the exponential map wraps this vector in the tangent space onto the hyperboloid.
It is constructed to preserve the vector's norm $||x||_{\mathcal{L}} = \sqrt{\langle x, x \rangle}$ as follows
\be
\exp_{\nu}(x) = \text{cosh}(||x||_{\mathcal{L}})\nu + \frac{x}{||x||_{\mathcal{L}}} \text{sinh}(||x||_{\mathcal{L}}).
\ee

As an example, Dodonaphy samples Euclidean vectors from a Normal proposal distribution and wraps them onto the hyperboloid.
This is instead of projecting the first coordinate of the hyperbolic vector.
Like before, MCMC is run for $10^{7}$ generations and all else is held equal with $\kappa=-1$, $d=3$.

\begin{figure}[htbp]
\begin{center}
\includegraphics[width=0.33\linewidth]{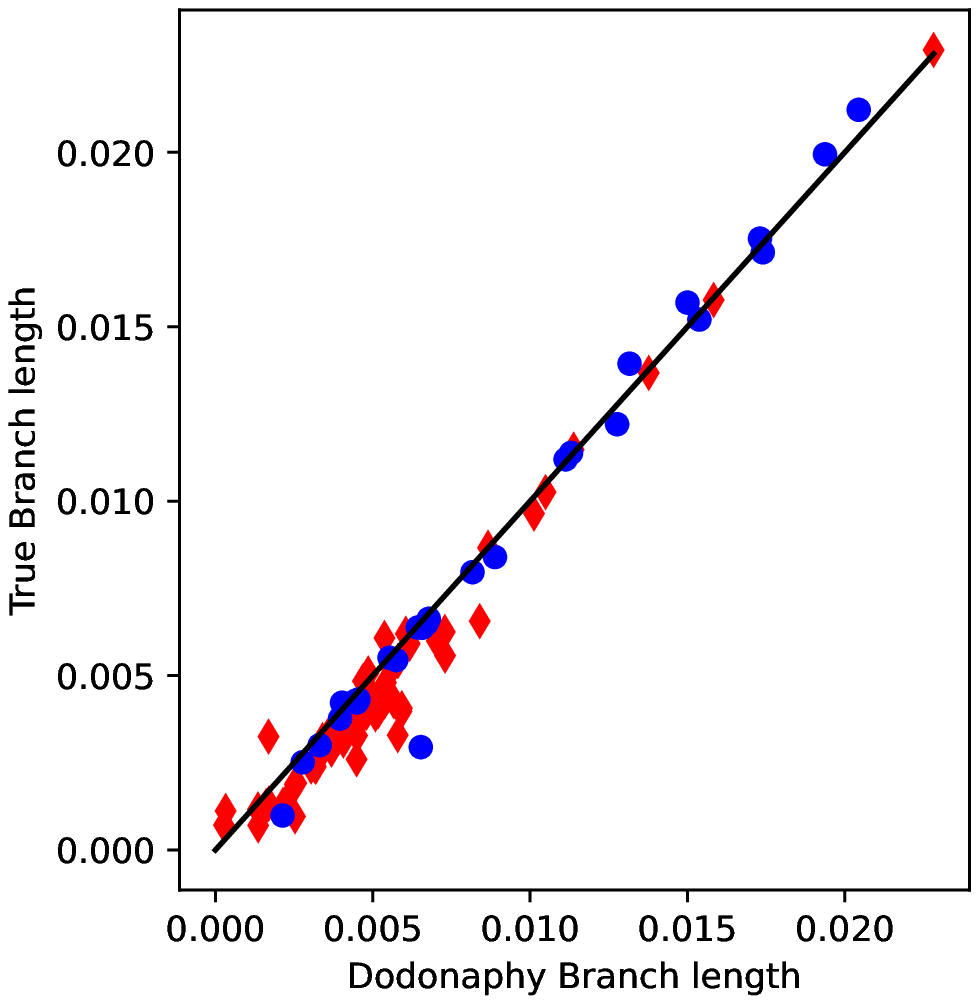}%
\includegraphics[width=0.33\linewidth]{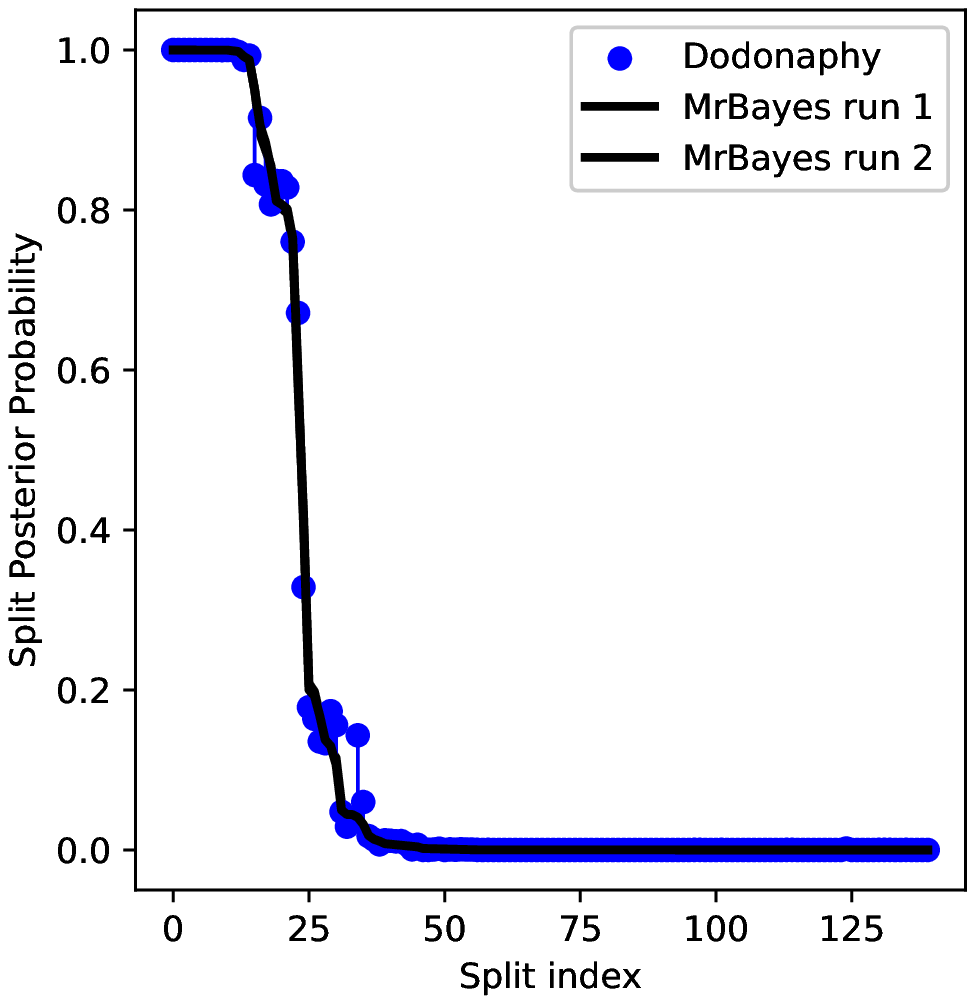}%
\includegraphics[width=0.33\linewidth]{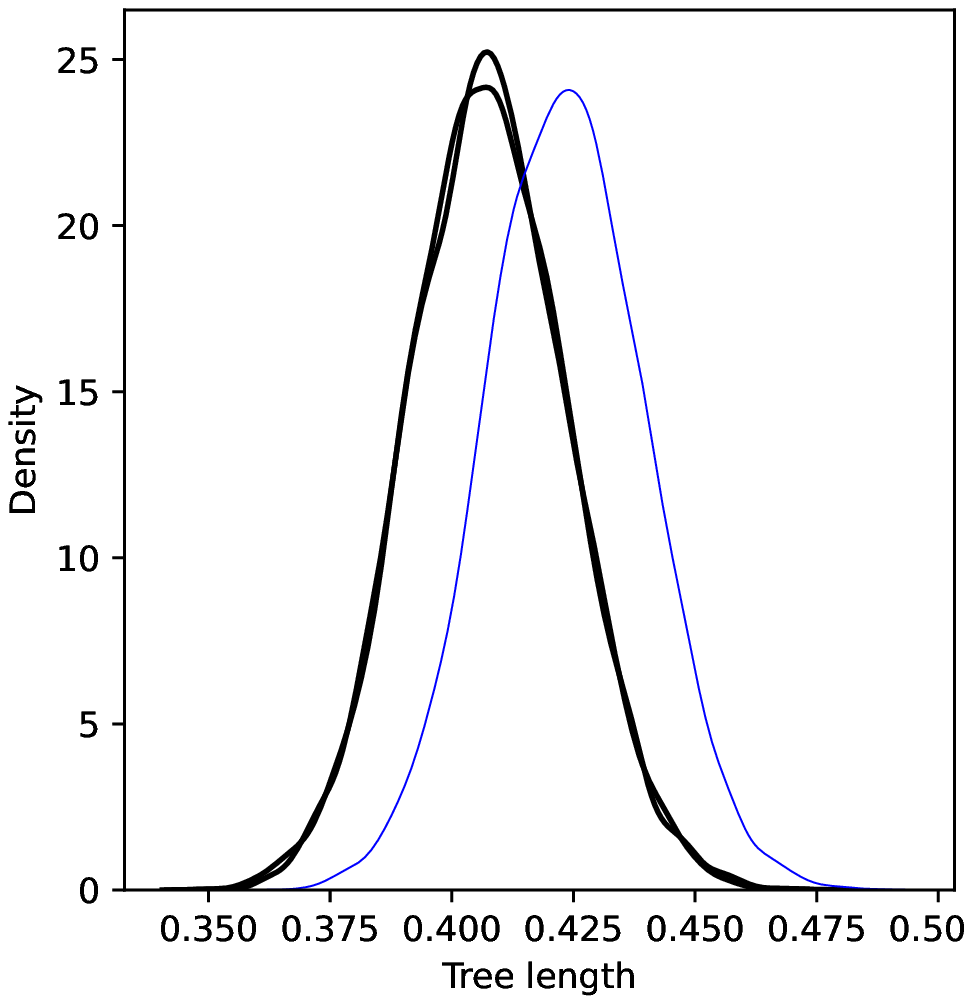}%
\caption{MCMC with wrapped Normal proposals compared to a MrBayes golden run.
a) Mean split lengths.
Leaf edges in blue circles, internal edges in red diamonds.
b) Split frequencies.
c) Tree length distribution.
}
\label{fig:wrap}
\end{center}
\end{figure}

As expected from MCMC, the results of using these proposals is reasonably similar to projecting proposal vectors via with $\phi$. 
Both the split frequencies (ASDSF of $0.004$) and mean lengths are captured well, figure~\ref{fig:wrap}.
The similar results of this method demonstrate the possibility of using alternative proposals.

\section{C: Normal Prior on Locations} \label{app:prior}
Placing a prior on embedding locations, rather than tree samples avoids the Jacobian involved in MCMC for transforming from embeddings to trees.
Here, we use a standard multivariate normal distribution $\mathcal{N}(\vec{0}, \Sigma)$ with covariance given by the $d$ dimensional identity matrix.
As before, Dodonaphy runs for $10^{6}$ iterations including a $10^{4}$ warm-up period using the tuning method presented above.
It starts from the distances on consensus tree from the golden run of MrBayes.

\begin{figure}[htbp]
\begin{center}
\includegraphics[width=0.7\linewidth]{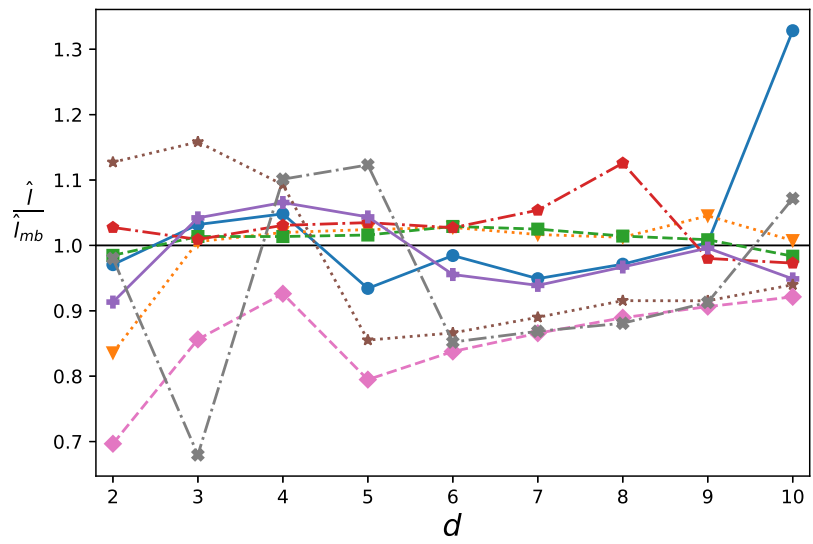}%
\caption{Median tree length estimates compared to MrBayes over varying dimensions.}
\label{fig:prior}
\end{center}
\end{figure}

As expected, the tree length estimates vary from those given by the golden run, merely because they have different priors, figure~\ref{fig:prior}.
More to the point, it demonstrates that there is no clear trend in trees becoming longer with higher dimensions.
This suggests that when a Gamma-Dirichlet prior on trees is used, the trend in increasing tree length (in the main text) derives from the non-trivial Jacobian involved in the MCMC.

\section{D: Closed form of Jacobians}
The Jacobian of the first projection $\phi: \mathbb{R}^{d} \to \mathbb{R}^{d+1}$ is non-square because it attaches an extra dimension to each taxa.
For a taxon with stored location $\bm{x} \in \mathbb{R}^d$, it takes the form 
\be
\frac{\partial{\phi (\bm{x})}}{\partial \bm{x}}  =
\begin{bmatrix}
	\bm{x}_/\hat{\bm{x}} \\
	 I_d
\end{bmatrix}
\in \mathbb{R}^{{d+1}} \times \mathbb{R}^{{d}}
\ee


For the second transformation from the hyperboloid to the pairwise distances, the Jacobian the distance between two points $\bm{x}, \bm{y} \in \mathbb{H}^d$ is adapted from~\citet{chowdhary2018improved}:
\be
\frac{\partial d(\bm{x}, \bm{y}, \kappa)}{\partial \bm{x}} = 
\frac{1}{\sqrt{-\kappa((\bm{x} * \bm{y})^2 -1)}} \Big(\frac{\sqrt{||\bm{y}||^2_2+1}}{\sqrt{||\bm{x}||^2_2+1}}\bm{x} - \bm{y}\Big).
\ee
It is clear that the distance between $\bm{x}$ and $\bm{y}$ does not depend on a third point $\bm{z}$:
\be
\frac{\partial d(\bm{x}, \bm{y}, \kappa)}{\partial \bm{z}} = 0.
\ee

\end{document}